\newtheorem{theorem}{Theorem}
\newtheorem{proposition}{Proposition}
\newtheorem{lemma}{Lemma}
\newtheorem{corollary}{Corollary}
\title{On $k$-Convex Polygons\footnote{Partially supported by the FWF
    Joint Research Program `Industrial Geometry' S9205-N12, Projects
    MEC MTM2006-01267, DURSI 2005SGR00692, Project Gen. Cat. DGR 2009SGR1040,
    DGR 2009SGR-1040, MEC MTM2009-07242, MEC MTM2008-04699-C03-02,
    and the bilateral Spain-Austria program `Acciones Integradas' ES
    01/2008 and HU2007-0017.}}
\author{O. Aichholzer\thanks{Institute for Software Technology,
        University of Technology, Graz, Austria,
        {\tt oaich@ist.tugraz.at}}
        \and
        F. Aurenhammer\thanks{Institute for Theoretical Computer Science,
        University of Technology, Graz, Austria,
        {\tt auren@igi.tugraz.at}}
        \and
        E. D. Demaine\thanks{Computer Science and Artificial Intelligence Laboratory,
        Massachusetts Institute of Technology, Cambridge, USA,
        {\tt edemaine@mit.edu}}
        \and
        F. Hurtado\thanks{Departament de Matem\`atica Aplicada II,
        Universitat Polit\`ecnica de Catalunya, Barcelona, Spain,
        {\tt Ferran.Hurtado@upc.edu}}
        \and
        P. Ramos\thanks{Departamento de Matem\'aticas, Universidad de Alcal\'a,
        Madrid, Spain, {\tt pedro.ramos@uah.es}}
        \and
        J. Urrutia\thanks{Instituto de Matem\'aticas,
        Universidad Nacional Aut\`onoma de M\`exico,
        {\tt urrutia@matem.unam.mx}}
      }
\begin{document}
\maketitle


\begin{abstract}
  We introduce a notion of $k$-convexity and explore
  polygons in the plane that have this property. Polygons which are \mbox{$k$-convex} can be
  triangulated with fast yet simple algorithms. However, recognizing them in general is a
  3SUM-hard problem. We give a characterization of
  \mbox{$2$-convex} polygons, a particularly interesting class, and show how
  to recognize them in \mbox{$O(n \log n)$}
  time. A description of their shape is given as well, which leads to Erd\H{o}s-Szekeres
  type results regarding subconfigurations of their vertex sets. Finally, we
  introduce the concept of generalized geometric permutations, and show that their number
  can be exponential in the number of \mbox{$2$-convex} objects considered.
\end{abstract}

\section{Introduction}
\label{intro}

The notion of convexity is central in geometry. As such, it has been generalized in many ways and for different reasons. In this paper we consider a simple and intuitive generalization of convexity, which to the best of our knowledge has not been worked on.  It leads to an appealing class of polygons in the plane with interesting structural and algorithmic properties.

\medskip

A set in $\mathbb{R}^d$ is convex if its intersection with every straight line is connected. This definition may be relaxed to directional convexity or \mbox{$D$-\emph{convexity}}~\cite{FW,MP},
by considering only lines parallel to one out of a (possibly infinite) set~$D$ of vectors.  A~special case is
\emph{ortho-convexity}~\cite{RW}, where only horizontal and vertical lines are allowed. For any fixed~$D$, the family of \mbox{$D$-convex} sets is closed under intersection, and thus can be treated in a systematic way using the notion of \emph{semi-convex} spaces~\cite{SW}, which is sometimes appropriate for investigating visibility issues. The \emph{$D$-convex hull} of a set~$M$ is the
intersection of all \mbox{$D$-convex} sets that contain~$M$. If~$D$ is a finite set, this definition of a convex hull may lead to an undesirably sparse structure---an effect which can be remedied by using a stronger, functional (rather than set-theoretic) concept of \mbox{$D$-convexity}~\cite{MP}.

\medskip

\paragraph*{\boldmath{$k$}-Convex Sets}  We consider a different
generalization of convexity. A~set~$M$ in~$\mathbb{R}^d$ is called $k$-\emph{convex (with respect to transversal lines)} if there exists no straight line that intersects~$M$ in more than $k$ connected components. Throughout the paper we will use the term \mbox{$k$-\emph{convex}}, for short\footnote{We face notational ambiguity. The term
  \mbox{`$k$-convex'} has, maybe not surprisingly, been used in   different settings, namely, for functions~\cite{P}, for
  graphs~\cite{ADS}, and for discrete point sets~\cite{pavel,KL}. Also, the concept of \mbox{$k$-point} convexity~\cite{Va} has later been called \mbox{$k$-convexity} in~\cite{BK}.}.
Note that \mbox{$1$-convexity} refers to convexity in its standard meaning. To reformulate in terms
of visibility, call two points \mbox{$x,y \in M$}
$k$-\emph{visible} if \mbox{$\overline{xy} \cap M$} consists of at most $k$ components.  Thus a set is \mbox{$k$-convex} if and only if any two of its points are mutually \mbox{$k$-visible}. Applications of this concept may arise from placement problems for modems that have the capacity of transmitting through a fixed number of walls~\cite{AFFHHU,UFR}.

\begin{figure}[t]
\begin{center}
\includegraphics[width=6cm]{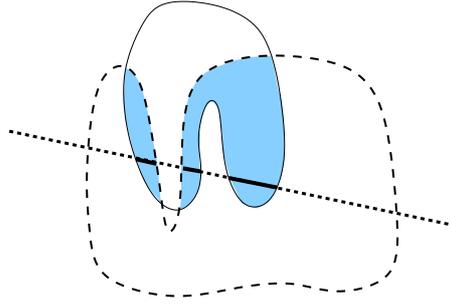}
\end{center}
\caption{Intersecting two $2$-convex sets}
\label{notkconvex}
\end{figure}

\medskip

Unlike directional convexity, \mbox{$k$-convexity} fails to show the intersection property: The intersection of \mbox{$k$-convex} sets is not \mbox{$k$-convex} in general (for fixed~$k$). Figure~\ref{notkconvex} gives an example. For \mbox{$k \geq 2$}, a \mbox{$k$-convex} set~$M$ may be disconnected, or if connected, its boundary may be disconnected. In this paper, we will
restrict attention (with an exception in Section~\ref{polydomain}) to simply connected sets in two dimensions,
namely, simple polygons in the plane.

\medskip

There are two notions of planar convexity that appear to be close to ours.  One is \mbox{$k$-\emph{point} convexity}~\cite{Va,BK} which requires that for any $k$ points in a set~$M$ in~$\mathbb{R}^2$, at least one of the line segments they span is contained in~$M$. Thus
\mbox{$2$-point} convex sets are precisely the convex sets.  The other is \mbox{$k$-\emph{link convexity}}~\cite{MSD}, being fulfilled for a given polygon~$P$ if, for any two points in~$P$, the geodesic path connecting them inside~$P$ consists of at most $k$ edges. The \mbox{$1$-link} convex polygons are just the convex polygons.  While there is a relation between \mbox{$k$-convexity} and the former concept (as we will show in Section~\ref{kconvex}), the latter concept is totally unrelated.

\medskip

We will study basic properties of \mbox{$k$-convex} polygons, in
comparison to existing polygon classes and convexity concepts in Section~\ref{kconvex}. This offers an alternative to the approach in~\cite{ABDGLS} to define `realistic' polygons as those being guardable (visible) by at most $k$ guards.  We prove that given a simple polygon~$P$, the problem of finding the smallest $k$ such that $P$ is  \mbox{$k$-convex} (equivalently, to find the stabbing number of~$P$) is 3SUM-hard.
On the other hand, a recognition algorithm that runs in $O(n^2)$ time for a polygon
with $n$ vertices is easy to obtain. Interestingly, \mbox{$k$-convex} polygons can be triangulated, by a quite simple method, in $O(n \log k)$ time. An $O(nk)$ time complexity is
achieved in~\cite{ABDGLS} for \mbox{$k$-guardable} polygons.

\medskip

The first nontrivial value, \mbox{$k=2$}, deserves particular attention. Already in this case,
a novel class of polygons is obtained. A characterization of
\mbox{$2$-convex} polygons is given in Section~\ref{2convex}. It leads  to an $O(n \log n)$ time algorithm for recognizing such polygons. Note that \mbox{$2$-convex} polygons add to the list of special classes of polygons~\cite{ET,ABDGLS} that allow for simple $O(n)$ time
triangulation methods. We also provide a qualitative description of their shape,
which implies an Erd\H{o}s-Szekeres type result, namely, that every \mbox{$2$-convex}
polygon with $n$ vertices contains a subset of at least $\sqrt n$ vertices in convex position, and that
its vertex set can be decomposed into at most $2\sqrt{2n}$ subsets in convex position.

\medskip

In Section~\ref{polydomain}, we turn our attention to general \mbox{$k$-convex} domains.
We give observations on the union and intersection properties of such domains,
and elaborate on an attempt to
generalize the notion of geometric permutations from convex sets to \mbox{$k$-convex} sets.
In contrast to the $O(m)$ bound in~\cite{ed-sh} on the number of
geometric permutations of $m$ convex sets, it turns out that the number of
generalized geometric permutations
can be exponential in $m$, already for \mbox{$2$-convex} sets. For \mbox{$2$-convex}
\textit{polygonal} domains, the number of generalized geometric permutations is $O(n^2)$,
if $n$ denotes the total number of their vertices.

\medskip

Various open questions are raised by the proposed concept of \mbox{$k$-convexity}. We list
those which seem most interesting to us, along with a brief discussion of our
results, in Section~\ref{discussion}.

\section{\mbox{$k$-Convex Polygons}}
\label{kconvex}

\subsection{Basic properties}
\label{kgeneralities}

We start with exploring some basic properties of \mbox{$k$-convex} polygons, and
compare them to existing polygon classes and related concepts.
All geometric objects we will consider are closed sets in the Euclidean plane.

\medskip

Let~$P$ be a simple polygon, and denote by~$n$ the number of vertices of~$P$.  Two line segments $e$ and~$e'$ are said to \emph{cross} if \mbox{$e \cap e'$} is a point in the relative interior of both $e$ and~$e'$.  Clearly, a polygon~$P$ is \mbox{$k$-convex} if every line segment with endpoints in~$P$ crosses at most \mbox{$2(k-1)$} edges of~$P$.  The \emph{stabbing
  number}~\cite{W,FLM} of a set of (interior-disjoint) line segments is the largest possible number of intersections between the set and a straight 
line. A~polygon is \mbox{$k$-convex} if and only if its stabbing number is at most~\mbox{$2k$}. Thus, all our observations on \mbox{$k$-convexity} could be reformulated in terms of stabbing numbers. Moreover, we have the following property, which we state for future reference.

\begin{lemma}\label{l:subset}
If $P$ is a $k$-convex polygon, then any polygon living on a subset of vertices of $P$ is also $k$-convex.
\end{lemma}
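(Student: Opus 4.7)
The natural approach is to invoke the stabbing-number reformulation stated just before the lemma: $P$ is $k$-convex if and only if every straight line meets $\partial P$ in at most $2k$ points. It will therefore suffice to show that for any sub-polygon $P'$ obtained from $P$, every line $\ell$ satisfies $|\ell\cap\partial P'|\leq|\ell\cap\partial P|$; the $k$-convexity of $P'$ is then immediate.

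Write the vertices of $P'$ in the cyclic order induced by $P$ as $v_{i_1},\ldots,v_{i_m}$, so that each edge $e_j=v_{i_j}v_{i_{j+1}}$ of $P'$ is a chord that short-cuts a polygonal sub-path $\pi_j$ of $\partial P$ sharing the same endpoints. These sub-paths are pairwise interior-disjoint and their union is precisely $\partial P$. For a line $\ell$ in generic position (not through a vertex of $P$ and not collinear with any edge), whenever $\ell$ crosses the chord $e_j$ it separates the two endpoints of $e_j$, and so must cross the continuous arc $\pi_j$ at least once. Summing over $j$, the number of transverse crossings of $\ell$ with $\partial P'$ is bounded by the number of its transverse crossings with $\partial P$, which by hypothesis is at most $2k$. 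A standard perturbation argument takes care of lines in degenerate position.

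The only genuinely delicate point is the meaning of \emph{``a polygon living on a subset of vertices of $P$''}. I read it as a simple polygon whose vertex sequence is a subsequence of the cyclic vertex sequence of $P$; this is what makes the chord/arc decomposition above work and is the interpretation needed for the Erd\H{o}s-Szekeres type applications announced in the abstract. Under that reading there is no real obstacle: the whole argument boils down to the single observation that a chord cannot separate its own endpoints without the opposite arc of $\partial P$ being crossed as well.
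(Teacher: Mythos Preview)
Your proof is correct. The paper itself does not supply a proof of this lemma at all; it is stated immediately after the observation that $k$-convexity is equivalent to having stabbing number at most $2k$, with the phrase ``which we state for future reference,'' and no proof environment follows. Your argument via the chord/arc decomposition is exactly the natural way to justify the claim, and it is sound: each edge of $P'$ short-cuts a boundary arc of $P$ with the same endpoints, and a generic line crossing the chord must separate those endpoints and hence cross the arc.

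Your reading of ``a polygon living on a subset of vertices of $P$'' as a simple polygon whose vertex sequence is a cyclic subsequence of that of $P$ is also the intended one; this is confirmed by the only place the lemma is invoked, namely the proof of Corollary~\ref{cor:convexPartition}, where one removes a convex subset of vertices and applies the lemma to the polygon on the remaining vertices in their original cyclic order.
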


\begin{figure}[h]
\begin{center}
\includegraphics[width=14cm]{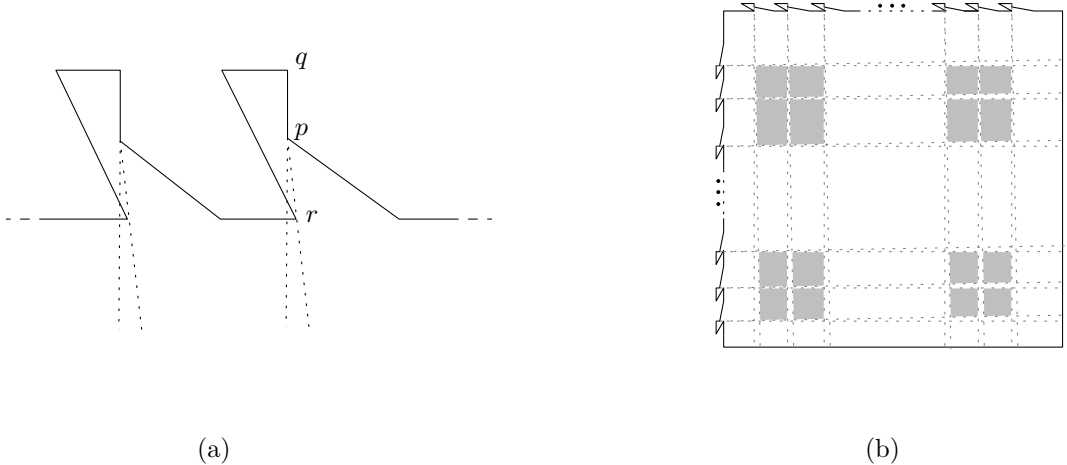}
\end{center}
\caption{Quadratic $2$-kernel construction}
\label{kernel}
\end{figure}

The kernel of a simple polygon~$P$ is the set of points that see all the polygon. Its generalization to $k$-convexity shows that \mbox{$2$-convexity} is already significantly more complex than standard convexity. The {\em $k$-kernel of $P$},
denoted as $M_k (P)$, is the set of points from which the entire polygon~$P$ is \mbox{$k$-visible}.
Note that~$P$ is \mbox{$k$-convex} if and only if $P=M_k$.
While $M_1$ is known to be a convex set which is computable in
$O(n)$ time~\cite{LP}, $M_2$ may have $\Omega(n^2)$ complexity: If we consider the `spike' in Figure~\ref{kernel}(a), the wedge between the lines $pq$ and $pr$ is not part of~$M_2$. If we arrange such spikes along the boundary of a rectangle, as in Figure~\ref{kernel}(b), we get
a quadratic number of disconnected areas which are part of the \mbox{$2$-kernel}. Therefore, any algorithm computing $M_2$, or trying to check 2-convexity via the comparison of $P$ and $M_2$ would have $\Omega(n^2)$ complexity in the worst case.

\begin{figure}[h]
\begin{center}
\includegraphics[width=10cm]{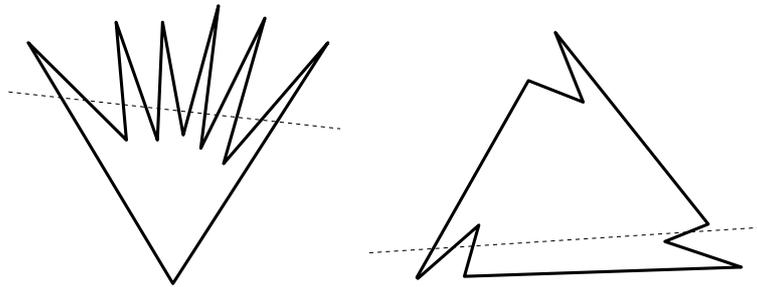}
\end{center}
\caption{Star-shaped versus 2-convex}
\label{starshaped}
\end{figure}

There is also no immediate relation to \emph{star-shaped} polygons, i.e., polygons~$P$ with~$M_1 \neq
  \emptyset$. Figure~\ref{starshaped} shows a polygon on the left hand side which is star-shaped but only
\mbox{$\frac{n}{2}$-convex}.  On the right hand side, we see a
polygon which is \mbox{$2$-convex} but not star-shaped.  Visually, \mbox{$2$-convexity} seems to be closer to convexity than is star-shapedness.  Note that cutting a \mbox{$2$-convex} polygon with any straight line leaves (at most) three parts, each being \mbox{$2$-convex} itself. This is not true in general
for star-shaped polygons.

\medskip

While \mbox{$2$-convexity} obviously restricts the winding
number of a polygon~\cite{V}, its link distance~\cite{MSD} is
unaffected and may well be~$\Theta(n)$.  Conversely, a polygon which is \mbox{$2$-link} convex (such that any two of its points are at link distance~$2$ or less) may fail to be \mbox{$k$-convex} for sublinear~$k$. The star-shaped polygon in Figure~\ref{starshaped} (left) is an example.

\begin{figure}[t]
\begin{center}
\includegraphics[scale=1.5]{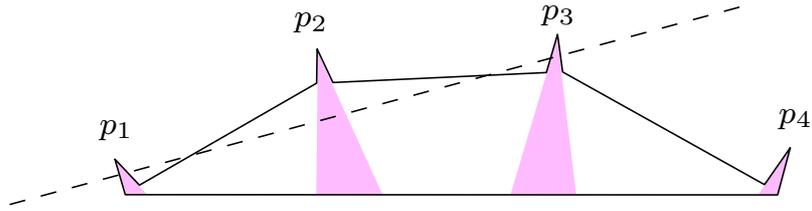}
\end{center}
\caption{Guarding a $3$-convex polygon} \label{guard2}
\end{figure}

\medskip

There is, interestingly, a relation to \mbox{$k$-{\emph{point}}} convexity as defined in~\cite{Va}.
A polygon $P$ is called $k$-point convex if for any $k$ points
$p_1, \ldots , p_k$ in $P$, at least one of the closed segments $p_i p_j$ belongs to $P$. Every \mbox{$k$-point} convex
polygon~$P$ is \mbox{$(k-1)$-convex}.  To verify this,
we prove that if~$P$ is not \mbox{$(k-1)$-convex}, then $P$ is not \mbox{$k$-point} convex: Because ~$P$ is not \mbox{$(k-1)$-convex}, there exists a line~$\ell$ which intersects~$P$ in at least $k$ components.
If we select a point in each component, it is clear that none of the segments defined by them is inside~$P$, and therefore~$P$ is not \mbox{$k$-point} convex.
However, no implication exists in the other direction.  For example, the
\mbox{$2$-convex} polygon in Figure~\ref{f:n3} fails to
be \mbox{$k$-point} convex for $k<\frac{n}{3}$. Also, any \mbox{$k$-point} convex polygon can be expressed as the union of $m$ convex polygons, where $m$ depends (exponentially) on~$k$ but is independent of the polygon size~$n$; see~\cite{BK}. Such a property is not shared by \mbox{$k$-convex} polygons, as can be seen from the \mbox{$2$-convex} polygon in Figure~\ref{f:n3}.

\medskip

\begin{figure}[htbp!]
\begin{center}
\includegraphics[width=0.7\textwidth]{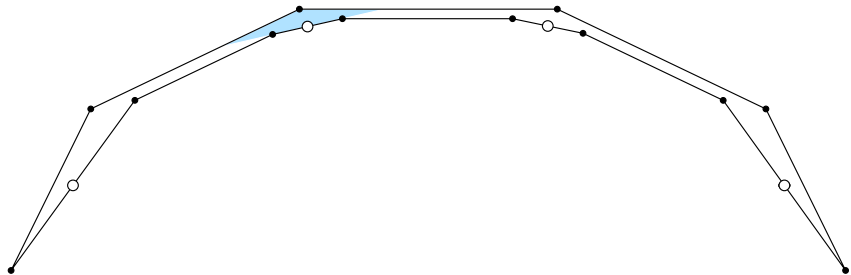}
\end{center}
\caption{Every white dot requires a different guard} \label{f:n3}
\end{figure}

The class of \mbox{$k$-convex} polygons also differs from the class of
\mbox{$k$-\textit{guardable}} polygons defined in~\cite{ABDGLS}.
It is known that any simple polygon with $n$ vertices can be
guarded with at most $\lfloor \frac{n}{3} \rfloor$ guards~\cite{OR,URR}.
The example in Figure~\ref{f:n3} shows that this number of guards can be already necessary for $2$-convex polygons. This is one of the reasons why most tools developed to study guarding problems of polygons are not very useful in the
study of modem illumination problems~\cite{AFFHHU}.

\medskip

\textit{Pseudo-triangles} are polygons with exactly three convex vertices, joined by three reflex side chains. Any pseudo-triangle is \mbox{$2$-convex}: If a straight line crosses a side chain twice, then it can cross each of the remaining two side chains at most once
(Figure~\ref{pseudopartition}, left).  That is, the stabbing number of a pseudo-triangle is four or less.  In the same way as a triangulation defines a partition of the underlying domain into convex polygons, any pseudo-triangulation~\cite{RSS} or any pseudo-convex decomposition~\cite{AHKST} gives a partition into \mbox{$2$-convex} polygons. It is an open problem (see Problem(e)
in the final section) whether it is possible to subdivide a polygon with $n$ vertices into
a sublinear number of \mbox{$2$-convex} polygons. If Steiner points are disallowed, then a
\mbox{$2$-convex} partition may have to consist of $\Theta(n)$ parts;
see Figure~\ref{pseudopartition} (right).

\begin{figure}[h]
\begin{center}
\includegraphics[width=10cm]{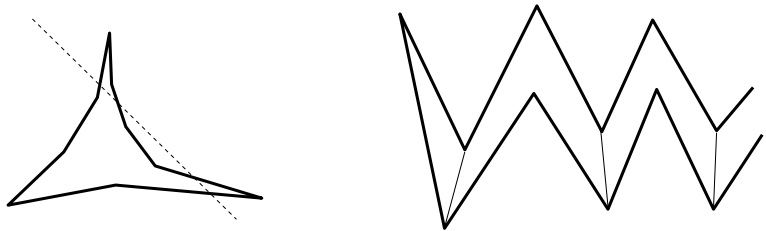}
\end{center}
\caption{Pseudo-triangle and $2$-convex partition}
\label{pseudopartition}
\end{figure}

\medskip

Two other natural questions for $k$-convex $n$-gons are decomposing them into few convex pieces in terms of~$k$, as well as giving a bound on the number of pieces of the convex hull minus the polygon. However, in both cases the answer is unrelated to $k$, because the polygon in Figure~\ref{f:2q} (left) is $2$-convex yet requires $n-2$ convex pieces, which is obviously tight because every polygon can be triangulated.
On the other hand, the polygon in Figure~\ref{f:2q} (right) is also 2-convex and has $\frac{n}{2}$ pockets, which is tight because two consecutive points on the convex hull are at least two edges apart if they are not adjacent on the polygon boundary.

\begin{figure}
\begin{center}
\includegraphics{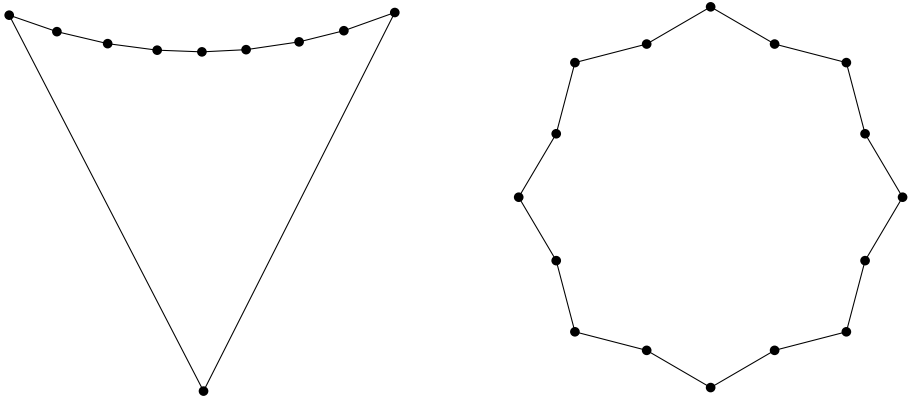}
\end{center}
\caption{A polygon with many convex pieces (left) and many pockets (right).}
\label{f:2q}
\end{figure}

\subsection{Recognition complexity}
\label{krecognition}

Let us start our considerations by observing that the stabbing number
of a polygon with $n$ vertices can be easily found in $O(n^2)$ time, as follows.
The standard duality transform maps each edge of the polygon to a double wedge consisting of lines through a common point, not including the vertical; the two lines that bound this dual wedge correspond to the endpoints of the primal segment. In the primal, any point inside a wedge is a line that stabs the segment. Therefore, the primal line that would stab most segments is in the dual a point that belongs
to as many double wedges as possible---a maximum depth problem that can easily be solved by constructing the arrangement and then traversing its cells.

The obtained $O(n^2)$ time bound is essentially tight, as in this section we prove that finding the stabbing number of a polygon, or, equivalently, finding the smallest~$k$
for which the polygon is~$k$-convex, is a 3SUM-hard problem. This family of problems is widely believed to have an $\Omega(n^2)$ lower bound for the worst case runtime \cite{GO,Ki}. We start by giving the
following result, which follows directly
from Theorem~4.1 in~\cite{GO}.


\begin{lemma}
\label{lem:3collinear}  For every integer $a$, let us consider the point $p_a=(a,a^3)$ on the cubic $y=x^3$. Then if $a,b,c$ are distinct integers, $p_a, p_b, p_c$ are collinear if and only if $a+b+c=0$.
\end{lemma}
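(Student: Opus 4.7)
My plan is to reduce the collinearity statement to a determinant condition and then factor that determinant explicitly, identifying $a+b+c$ as the decisive factor.

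First, I would recall the standard criterion: three points $(a,a^3)$, $(b,b^3)$, $(c,c^3)$ in $\mathbb{R}^2$ are collinear if and only if
\[
\Delta(a,b,c) \;:=\; \det\begin{pmatrix} 1 & a & a^3 \\ 1 & b & b^3 \\ 1 & c & c^3 \end{pmatrix} \;=\; 0.
\]
So the claim is equivalent to showing that, for pairwise distinct $a,b,c$, the determinant $\Delta(a,b,c)$ vanishes if and only if $a+b+c=0$.

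Next, I would factor $\Delta$. Viewed as a polynomial in the three variables $a,b,c$, the determinant is alternating: it vanishes when any two of $a,b,c$ coincide (two equal rows). Hence each of the three linear factors $(b-a)$, $(c-a)$, $(c-b)$ divides $\Delta$, so their product, a polynomial of total degree $3$, divides it. Since $\Delta$ has total degree $4$, the remaining cofactor is a polynomial of total degree $1$ which is symmetric in $a,b,c$ (because $\Delta/((b-a)(c-a)(c-b))$ is symmetric, dividing an alternating polynomial by the alternating Vandermonde), so it must be a scalar multiple of $a+b+c$. Expanding $\Delta$ once along the first column and checking the coefficient of, say, the monomial $a^3b$ (or comparing one specific value) pins the constant down to~$1$, giving the clean identity
\[
\Delta(a,b,c) \;=\; (b-a)(c-a)(c-b)\,(a+b+c).
\]

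Finally, since $a,b,c$ are distinct integers, the three Vandermonde-type factors are all nonzero, so $\Delta(a,b,c)=0$ if and only if $a+b+c=0$, which is exactly the statement of the lemma. There is no real obstacle here: the proof is a routine symmetric-function argument, and the only place where one could slip is in verifying the leading constant of the factorization, which I would either do by direct expansion or by plugging in a convenient numerical triple such as $(a,b,c)=(1,2,-3)$ where $a+b+c=0$ and comparing with an independently computed value to fix the sign.
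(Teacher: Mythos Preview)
Your proof is correct and follows essentially the same approach as the paper: both reduce collinearity to the vanishing of the same $3\times 3$ determinant and use the factorization $(b-a)(c-a)(c-b)(a+b+c)$. One small slip: your suggested check triple $(1,2,-3)$ has $a+b+c=0$, so both sides vanish and it cannot pin down the constant; use a triple with $a+b+c\neq 0$ (e.g., $(0,1,2)$) or the direct expansion you also mention.
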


\begin{proof}
  The points $p_a, p_b, p_c$ are collinear if and only if the
  determinant
$$
\left| \begin{array}{ccc} a & a^3 & 1\\ b & b^3 & 1\\ c &c^3 &1 \\ \end{array} \right|=
(b-a)(c-a)(c-b)(a+b+c)
$$
vanishes, which, the numbers being
different, happens exactly when $a+b+c=0$.
\end{proof}

We next show that the points $p_x$ with $x\in \mathbb{Z}$ on the cubic $y=x^3$ can be replaced by infinitesimally small vertical segments $S_x$ with upper endpoint $p_x$, such that three of them can be stabbed by a single line if and only if their three upper endpoints are collinear.

\begin{lemma}
\label{lem:hair} Let $m,a,b,c,M$ be five integers such that
$m<a<b<c<M$. Let $\varepsilon=\frac{1}{6(M-m)}$ and let $s_t$ be the (vertical) segment with endpoints $p_t=(t,t^3)$ and
$p'_t=(t,t^3-\varepsilon)$. Then $s_a, s_b$ and $s_c$ can be
stabbed by a single line if and only if $p_a$, $p_b$, and $p_c$ are collinear.
\end{lemma}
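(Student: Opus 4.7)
The plan is to handle the two directions of the biconditional separately.

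\medskip

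\noindent\textbf{Easy direction.} If $p_a,p_b,p_c$ are collinear, then the straight line through them obviously passes through the upper endpoint of each of the three segments $s_a,s_b,s_c$, and so stabs them all.

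\medskip

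\noindent\textbf{Harder direction.} Suppose a line $\ell$ stabs $s_a$, $s_b$ and $s_c$. Because the segments are vertical, the stabbing points have the form $q_t=(t,t^3-\delta_t)$ with $0\le\delta_t\le\varepsilon$ for $t\in\{a,b,c\}$. The three points $q_a,q_b,q_c$ lie on $\ell$, hence the determinant
\[
D \;=\;
\left|
\begin{array}{ccc}
a & a^3-\delta_a & 1 \\
b & b^3-\delta_b & 1 \\
c & c^3-\delta_c & 1
\end{array}
\right|
\]
vanishes. Expanding by cofactors and using the identity already invoked in Lemma~\ref{lem:3collinear}, the cubic part contributes $(b-a)(c-a)(c-b)(a+b+c)$, while the contribution of the three $-\delta_t$ perturbations is a simple linear combination. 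I would carry out this expansion carefully and arrive at
\[
(b-a)(c-a)(c-b)(a+b+c) \;=\; \delta_a(c-b)+\delta_b(a-c)+\delta_c(b-a).
\]

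\medskip

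The plan for the final step is a size comparison. Assume toward a contradiction that $a+b+c\ne 0$; then $|a+b+c|\ge 1$ because $a,b,c\in\mathbb{Z}$. Since $a<b<c$ are integers, $b-a\ge 1$, $c-b\ge 1$, and $c-a\ge 2$, so the left-hand side has absolute value at least $2$. For the right-hand side, using $0\le\delta_t\le\varepsilon$ and the telescoping identity $(c-b)+(c-a)+(b-a)=2(c-a)$,
\[
\bigl|\delta_a(c-b)+\delta_b(a-c)+\delta_c(b-a)\bigr|
\;\le\; 2\varepsilon(c-a)
\;\le\; 2\varepsilon(M-m)
\;=\;\tfrac{1}{3}.
\]
This contradicts the lower bound of $2$ on the left-hand side, so we must have $a+b+c=0$, and then Lemma~\ref{lem:3collinear} gives collinearity of $p_a,p_b,p_c$.

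\medskip

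\noindent\textbf{Main obstacle.} The only nonroutine point is choosing (and justifying) the right value of $\varepsilon$: the segment length must be small enough that the determinant's perturbation term cannot swallow the integer gap $|a+b+c|\ge 1$ amplified by the positive factor $(b-a)(c-a)(c-b)$. The bookkeeping $2\varepsilon(c-a)\le 2\varepsilon(M-m)$ together with $\varepsilon=\frac{1}{6(M-m)}$ is tailored precisely so that the error stays below the integrality threshold, and verifying this inequality is where the argument really lives.
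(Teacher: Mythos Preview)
Your proof is correct and follows essentially the same approach as the paper: expand the collinearity determinant into the integer part $(b-a)(c-a)(c-b)(a+b+c)$ plus a small perturbation, then show the perturbation is too small to cancel a nonzero integer. Your bounds are in fact slightly sharper than the paper's (you get $|\text{RHS}|\le\tfrac{1}{3}$ via the telescoping identity and $|\text{LHS}|\ge 2$, whereas the paper uses the cruder $|\delta|\le 3\varepsilon(M-m)=\tfrac{1}{2}$ against $|z|\ge 1$), but the idea is identical.
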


\begin{proof}
  Assume that the points $p_a, p_b, p_c$ are not collinear, and let us take three points  $q_a=(a,a^3-\varepsilon_1), q_b=(b,b^3-\varepsilon_2),  q_c=(c,c^3-\varepsilon_3)$, with $0\le\varepsilon_1, \varepsilon_2, \varepsilon_3\le\varepsilon$, i.e., three points on the segments $s_a, s_b$ and $s_c$.
  The points $q_a, q_b, q_c$ would be collinear if and only if the   determinant
\[\left| \begin{array}{ccc} a & a^3-\varepsilon_1 & 1\\ b & b^3-\varepsilon_2 & 1\\ c &c^3-\varepsilon_3 &1 \\ \end{array} \right|=
\left| \begin{array}{ccc} a & a^3 & 1\\ b & b^3 & 1\\ c &c^3 &1 \\
\end{array} \right|+\left| \begin{array}{ccc} a & -\varepsilon_1 & 1\\ b & -\varepsilon_2 & 1\\ c &-\varepsilon_3 &1 \\ \end{array}
\right|=\]
\[=\underbrace{(b-a)(c-a)(c-b)(a+b+c)}_z+\underbrace{\varepsilon_1
(b-c)-\varepsilon_2 (a-c)+\varepsilon_1 (a-b)}_\delta\] is 0,
but this is impossible because by Lemma \ref{lem:3collinear}, $z$ is an integer different from 0, which cannot become 0 by the addition of $\delta$, because

\[|\varepsilon_1 (b-c)-\varepsilon_2 (a-c)+\varepsilon_1 (a-b)|\le
|\varepsilon_1 (b-c)|+|\varepsilon_2 (a-c)|+|\varepsilon_1
(a-b)|\le \varepsilon\cdot 3(M-m)=\frac{1}{2}.\]
\end{proof}


\begin{theorem}
\label{thm:3SUMhardness} The problem of finding the stabbing
number of a polygon is 3SUM-hard.
\end{theorem}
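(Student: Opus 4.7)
The plan is to reduce 3SUM---given $n$ integers, do three of them sum to zero?---to computing the stabbing number of a simple polygon. Given integers $a_1,\ldots,a_n$ with $|a_i|\le M$, Lemma~\ref{lem:hair} supplies $n$ short vertical segments $s_{a_i}$, three of which can be simultaneously stabbed by a line if and only if the three corresponding integers sum to zero. I would assemble these segments as edges of a single simple polygon $P$ whose stabbing number faithfully reflects the maximum number of segments some line can pierce simultaneously.

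Concretely, I would use a ``zigzag'' construction. Sort the abscissas as $x_1<x_2<\cdots<x_N$ after enlarging the sample set with $O(n)$ non-integer dummy abscissas (for instance $a_{(\ell)}+\tfrac{1}{3}$), chosen so that no triple involving at least one dummy can sum to zero and hence no spurious 3-stab configuration arises. Let $V_\ell=(x_\ell,x_\ell^3)$ and $W_\ell=(x_\ell,x_\ell^3-\varepsilon)$ be the two endpoints of $s_{x_\ell}$, and take $P$ to be the simple polygon whose counterclockwise vertex list is
\[
V_1,\,W_1,\,V_2,\,W_2,\,\ldots,\,V_N,\,W_N,\,R,\,L,
\]
with $R$ and $L$ far-below auxiliary vertices at a common very low $y$-coordinate. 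The top boundary of $P$ alternates between vertical edges $V_\ell W_\ell=s_{x_\ell}$ and ``diagonal'' edges $W_\ell V_{\ell+1}$ which closely approximate chords of the cubic $y=x^3$; the bottom is closed by a simple U-shape through $R$ and $L$. Non-adjacent edges live over disjoint $x$-intervals, so $P$ is simple, and the construction is polynomial in $n$ with polynomial-bit coordinates.

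The core counting argument is based on the cubic $g(x)=x^3-\ell(x)$ associated with any candidate line $\ell$. Since $g$ has at most three real roots, the sign sequence of $g$ on the sorted samples exhibits at most three sign changes. Choosing $\varepsilon$ smaller than $|g(x_\ell)|$ at every non-root sample, a ``pair-internal'' sign change $V_\ell\to W_\ell$ in the refined signed-distance sequence corresponds exactly to $\ell$ stabbing $s_{x_\ell}$, while the ``between-pair'' changes $W_\ell\to V_{\ell+1}$ are governed by the sign changes of $g$ at consecutive samples. Combined with Lemma~\ref{lem:hair}: in the no-3-sum case every line stabs at most two segments, whereas in the 3-sum case the tritangent line through the three cubic points of the triple can be perturbed to stab all three. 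The dummies guarantee that all four sign regions of $g$ contain a sample, so the between-pair count is saturated for the extremal lines. With the fixed constant contribution from the $R$--$L$ closure, the stabbing number is larger in the 3-sum case than in the no-3-sum case, so computing it decides 3SUM and establishes 3SUM-hardness.

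The main obstacle is producing a \emph{strict} gap between the two cases. A careful bookkeeping shows that naively the two totals can coincide, because a 2-stab line in the no-3-sum case borrows an extra between-pair sign change from its third (non-sample) cubic intersection, exactly offsetting the extra pair-internal change gained by a 3-stab line. The standard remedy is to amplify the contribution of each segment stab, for example by replacing each $s_{x_\ell}$ with a small convex cell having $s_{x_\ell}$ as one side (so every stab yields two polygon-edge crossings), or by placing a twin segment $s'_{x_\ell}$ at minute horizontal distance $\delta'\le\varepsilon/(3M^2)$---since the cubic transversals here have slope at most $3M^2$, any line stabbing $s_{x_\ell}$ also stabs its twin. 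Either ``doubling'' makes each segment stab worth two crossings, so the 3-sum case strictly exceeds the no-3-sum case, completing the reduction.
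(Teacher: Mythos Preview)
Your reduction follows the same high-level plan as the paper---place the data on the cubic $y=x^3$, invoke Lemmas~\ref{lem:3collinear} and~\ref{lem:hair}, and read off 3SUM from the stabbing number---but the polygon you build is quite different, and two of the auxiliary devices you introduce are not sound as stated.

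First, the dummy abscissas. You insert non-integer samples such as $a_\ell+\tfrac13$ and assert that ``no triple involving at least one dummy can sum to zero and hence no spurious 3-stab configuration arises.'' The ``hence'' is unjustified: the proof of Lemma~\ref{lem:hair} relies on $(b-a)(c-a)(c-b)(a+b+c)$ being a nonzero \emph{integer}, so that its absolute value is at least~$1$ and dominates the $\varepsilon$-perturbation. For non-integer abscissas this lower bound evaporates, and three segments may be simultaneously stabbable even when the corresponding sum is nonzero. Your concrete example is already broken in the other direction too: three dummies $a_i+\tfrac13,\ a_j+\tfrac13,\ a_k+\tfrac13$ sum to zero exactly when $a_i+a_j+a_k=-1$, so an input containing a triple summing to~$-1$ (but not to~$0$) would yield a false positive via the dummies alone.

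Second, the amplification fix. You correctly diagnose that in the bare zigzag the 2-stab and 3-stab totals can coincide, and you propose doubling each segment or attaching a small convex cell. This is the right instinct, but you have not carried it out: once cells are attached the polygon is no longer the zigzag you analysed, simplicity must be rechecked, and the between-pair bookkeeping has to be redone. The paper avoids both issues with a single cleaner construction. It starts from the polygon $P_1$ whose upper boundary is the chain $p_m p_{a_1}\cdots p_{a_t} p_M$ along the cubic (stabbing number~$4$), and then replaces each vertex $p_{a_i}$ by a thin notch $u_{a_i}\,p'_{a_i}\,v_{a_i}$ with $u_{a_i},v_{a_i}$ on the cubic at horizontal offset~$\varepsilon^2$ and $p'_{a_i}$ at depth~$\varepsilon$. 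Locally the crossing count at a stabbed notch jumps from~$1$ to~$3$, so the stabbing number of the resulting polygon $P_2$ separates the two cases by a fixed additive constant---no dummies, no separate doubling step. Your amplification idea, once executed carefully, would essentially rediscover this notch construction; the paper just builds it in from the start.
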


\begin{proof}
The 3SUM problem is defined as follows: Given a set $S$ of $n$
integers, do there exist three elements $a, b, c\in S$ such that $a + b + c = 0?$ We will prove below that this problem can be reduced in $O(n\log n)$ time to the problem of computing the stabbing number of an $n$-gon. In other words, using the notation in \cite{GO},
\[\text{3SUM}\lll_{O(n\log n)} \text{stabbing number of a polygon.}\]

Let $x_1,\dots,x_n$ be the input integers. We proceed to the
reduction by steps.\medskip

\begin{figure}[htbp!]
\begin{center}
\includegraphics{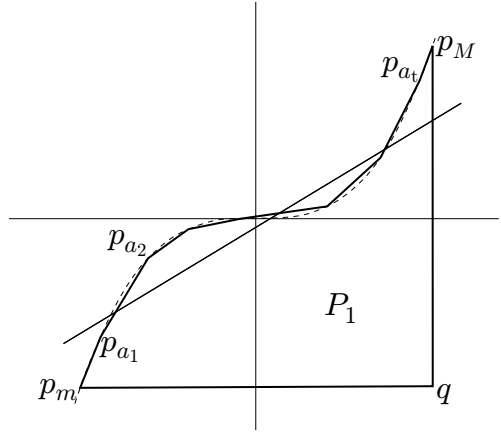}
\end{center}
\caption{Polygon $P_1$ with vertices on the cubic $y=x^3$
(dashed). The scale of the axis is not $1\colon \! 1$, to make the figure
visible.} \label{fig:polygon1}
\end{figure}

\noindent{\sl Step 1}. Sort the input numbers; let $y_1\le y_2\le \cdots \le y_n$ be the resulting list, $\mathcal{L}$. This step is done in $O(n\log n)$ time.
\medskip

\noindent{\sl Step 2}. If $0$ appears thrice in $\mathcal{L}$,
exit with a sum of three numbers in the input being $0$, otherwise continue. The step is completed in linear time. \medskip

\noindent{\sl Step 3}. If $a\ne0$ appears at least twice in
$\mathcal{L}$, check whether $-2a\in \mathcal{L}$. If
so, exit with a sum of three numbers in the input being
$0$, otherwise continue. The step is completed in $O(n\log n)$
time, as binary search can be used in the sorted list (in fact, $O(n)$ time is sufficient by scanning the list from left to right and from right to left, in a coordinated simultaneous advance).
\medskip

\noindent{\sl Step 4}. Remove multiples from the list
$\mathcal{L}$ so that each number appears exactly once. This
requires linear time. Let $a_1<a_2<\cdots <a_t$ (where $t\le n$) be these numbers.
\medskip

\noindent{\sl Step 5}. Define $m=a_1-1$, $M=a_t+1$, and $q=(M,m^3)$. Now let us consider the polygon $P_1$ whose vertices, described clockwise, are $p_mp_{a_1}p_{a_2} \dots p_{a_t}p_Mq$, where $p_x=(x,x^3)$, as in the preceding lemma
(Figure~\ref{fig:polygon1}). Observe that the stabbing number of $P_1$ is $4$, and that the polygon can be constructed in $O(n)$ time.
\medskip

\begin{figure}[htbp!]
\begin{center}
\includegraphics{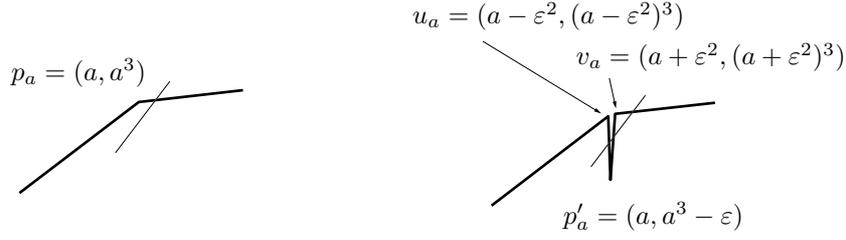}
\end{center}
\caption{Polygon $P_2$ with vertices on the cubic $y=x^3$ replaced
by infinitesimal vertical slots.} \label{fig:polygon2}
\end{figure}

\noindent{\sl Step 6}. Next  we modify the polygon $P_1$ to become the polygon $P_2$ whose vertices, described clockwise, are $p_mu_{a_1}p'_{a_1}v_{a_1}u_{a_2}p'_{a_2}v_{a_2}\dots
u_{a_t}p'_{a_t}v_{a_t}p_Mq$, where $p'_x=(x,x^3-\varepsilon),
u_x=(x-\varepsilon^2,(x-\varepsilon^2)^3),
v_x=(x+\varepsilon^2,(x+\varepsilon^2)^3))$; see
Figure~\ref{fig:polygon2}. This polygon can be constructed in
$O(n)$ time, and in the vicinity of the point $p_{a_i}$ its
stabbing number changes locally from 1 to 3; see
Figure~\ref{fig:polygon3}.
Let $s_{a_i}$ be the segment joining $u_{a_i}$ to $p'_{a_i}$.
Therefore, the stabbing number of
$P_2$ is $12$ if and only if three of the segments
$s_{a_i}$ can be simultaneously stabbed, and $8$ otherwise, as two of those segments can always be stabbed.
\medskip

\begin{figure}[htbp!]
\begin{center}
\includegraphics{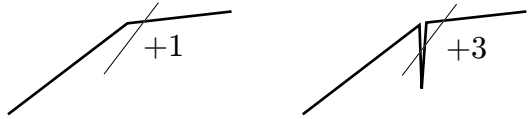}
\end{center}
\caption{From $P_1$ to $P_2$ the stabbing number locally increases
from $1$ to $3$.} \label{fig:polygon3}
\end{figure}

\noindent{\sl Step 7}. Compute the stabbing number of polygon
$P_2$ using the best possible algorithm.\medskip

\noindent{\sl Step 8}. If the stabbing number of polygon $P_2$ is $10$, conclude that there were three numbers in the initial input such that their sum is $0$; if the stabbing number of $P_2$ is $8$, conclude that there are no such three numbers.
\medskip

The correctness of Step 8 is a direct consequence of
Lemma~\ref{lem:3collinear} and Lemma~\ref{lem:hair}. As all steps but Step 7 have overall complexity $O(n\log n)$, we conclude that Step 7,  the computation of the stabbing number of a polygon, is a 3SUM-hard problem, as claimed.
\end{proof}

As an immediate consequence we obtain:

\begin{corollary}
\label{cor:kConvexity} The problem of deciding the smallest number $k$ such that a given polygon is $k$-convex is 3SUM-hard.
\end{corollary}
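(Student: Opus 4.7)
The plan is to derive the corollary by a direct reuse of the reduction constructed in the proof of Theorem \ref{thm:3SUMhardness}, exploiting the quantitative equivalence between stabbing number and $k$-convexity that was recorded in Section \ref{kgeneralities}: a polygon $P$ is $k$-convex if and only if its stabbing number is at most $2k$. Consequently, the smallest $k$ for which $P$ is $k$-convex equals $\lceil \sigma(P)/2 \rceil$, where $\sigma(P)$ denotes the stabbing number. So a procedure that returns this smallest $k$ essentially returns $\sigma(P)$ up to a factor of two, and in particular distinguishes any two stabbing-number values of opposite parity or of sufficiently different magnitude.

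The key observation is that the reduction in Theorem \ref{thm:3SUMhardness} already produces, in $O(n \log n)$ time, a polygon $P_2$ whose stabbing number is either $8$ or $10$, with $\sigma(P_2)=10$ precisely when the 3SUM instance has a solution. First I would translate these two values through the formula above: $\lceil 8/2 \rceil = 4$ and $\lceil 10/2 \rceil = 5$. Hence the smallest $k$ such that $P_2$ is $k$-convex is $4$ in the ``no'' case and $5$ in the ``yes'' case. Any algorithm that reports this smallest $k$ can therefore be used, with only an $O(n\log n)$ overhead from Steps 1--6 of the earlier construction plus a constant-time comparison of the returned value against the threshold $5$, to decide the original 3SUM instance.

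Putting this together, the whole argument is really a one-line consequence: the reduction \[\text{3SUM} \lll_{O(n\log n)} \text{stabbing number of a polygon}\] established in Theorem \ref{thm:3SUMhardness} doubles as a reduction \[\text{3SUM} \lll_{O(n\log n)} \text{smallest $k$ such that the polygon is $k$-convex},\] because the two possible stabbing-number values in the reduction map, under $k = \lceil \sigma/2 \rceil$, to two distinct $k$-values. No new geometric construction is required. There is no genuine obstacle here; the only point to be careful about is to record explicitly that the gap between the two stabbing numbers in the construction ($8$ versus $10$) remains a gap after the ceiling operation ($4$ versus $5$), so no information is lost when passing from $\sigma(P)$ to the $k$-convexity parameter.
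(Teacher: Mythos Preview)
Your proposal is correct and takes essentially the same approach as the paper, which simply records the corollary as ``an immediate consequence'' of Theorem~\ref{thm:3SUMhardness} without further argument. Your write-up merely spells out the translation $k=\lceil\sigma(P)/2\rceil$ and checks that the two stabbing-number outcomes in the reduction land on distinct $k$-values, which is exactly the implicit reasoning behind the paper's one-line claim.
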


Also, if we replace Step 7 in the proof of
Theorem~\ref{thm:3SUMhardness} by checking whether or not polygon $P_2$ is $4$-convex, we get:

\begin{corollary}
\label{cor:4Convexity} The problem of deciding whether a given
polygon is $4$-convex is 3SUM-hard.
\end{corollary}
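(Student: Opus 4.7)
The plan is to recycle the reduction constructed in the proof of Theorem~\ref{thm:3SUMhardness} and exploit the fact that a polygon is $4$-convex if and only if its stabbing number is at most $2\cdot 4 = 8$. The polygon $P_2$ built there has stabbing number exactly $8$ when the input to the 3SUM instance admits no zero-sum triple, and strictly larger (the paper records $12$) when it does. Therefore $4$ is precisely the threshold that separates the two cases, and a subroutine that decides $4$-convexity can be dropped into Step~7 without changing anything else in the reduction.

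Concretely, given a 3SUM instance $x_1,\dots,x_n$, I would run Steps~1--6 of the proof of Theorem~\ref{thm:3SUMhardness} in $O(n\log n)$ time to obtain $P_2$, invoke the hypothetical $4$-convexity decision procedure on $P_2$, and report a zero-sum triple if and only if $P_2$ is \emph{not} $4$-convex. Correctness is a direct consequence of Lemmas~\ref{lem:3collinear} and~\ref{lem:hair}, exactly as in Step~8 of Theorem~\ref{thm:3SUMhardness}: three of the infinitesimal slots $s_{a_i}, s_{a_j}, s_{a_k}$ can be simultaneously pierced by one line if and only if $a_i+a_j+a_k=0$, and this is precisely the event that pushes the stabbing number of $P_2$ strictly above~$8$.

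There is essentially no new obstacle to clear: the entire geometric content, namely that no feature of $P_2$ contributes a stabbing count above $8$ except through such a collinear triple of slot endpoints, is already the heart of the analysis of Step~6. The value of the corollary is the observation that the threshold pinned down there falls exactly at $2k$ with $k=4$, so recognition of $k$-convexity is already 3SUM-hard at this single concrete value, rather than only when $k$ is part of the input. If one wanted the statement to be self-contained, the one routine verification I would add is to re-examine the portions of $P_2$ outside the slot neighborhoods and confirm that every line crosses those portions in at most the expected $O(1)$ edges, so that the global stabbing number is indeed governed by the slot analysis.
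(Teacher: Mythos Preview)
Your proposal is correct and follows essentially the same approach as the paper: the paper's entire argument for this corollary is the single remark that one may replace Step~7 of the proof of Theorem~\ref{thm:3SUMhardness} by a $4$-convexity test on~$P_2$, which is exactly what you spell out. Your added observation that the stabbing-number threshold $8=2\cdot 4$ is what makes $k=4$ the natural cutoff, and your note about verifying the behavior outside the slot neighborhoods, are reasonable elaborations but not required beyond what the paper already claims.
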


%

\subsection{Fast triangulation}
\label{ktriangulation}

Triangulating a simple polygon faster than in $O(n \log n)$ time with a simple method
is a challenging open problem. For \mbox{$k$-convex} polygons, this can be achieved,
by the fact that we can sort the vertices of a \mbox{$k$-convex} polygon~$P$ in any given direction (say, \mbox{$x$-direction}) in $O(kn)$ time: Simply scan around~$\partial P$ and use insertion sort, starting each time from the place where the \mbox{$x$-value} of the previous vertex has been inserted. Then any fixed value~$x_j$, once being inserted, takes part in later comparisons at most \mbox{$2k-1$} times because otherwise, the vertical line~$x=x_j$ would intersect~$P$ in more than $k$ components. Having \mbox{$x$-sorted} $P$'s vertices, a simplified plane sweep method can be used to build a vertical trapezoidation~\cite{CI,FM} (and then a triangulation) of~$P$. Only trivial data structures may be used,
as the scenario on the sweep line is of complexity~$O(k)$,
by the \mbox{$k$-convexity} of~$P$. Thus, each vertex of~$P$ can be processed in~$O(k)$ time during the sweep. We conclude:

\begin{proposition}
\label{p:triang} Any \mbox{$k$-convex} polygon can be triangulated in $O(kn)$ time and $O(n)$ space.
\end{proposition}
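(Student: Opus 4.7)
The plan is to formalize the two ingredients sketched before the statement: (i) the $n$ vertices of $P$ can be sorted by $x$-coordinate in $O(kn)$ time, and (ii) from that sorted order, a vertical trapezoidation (and hence a triangulation) of $P$ can be built in $O(kn)$ time with only $O(n)$ space.

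For (i) I would walk once around $\partial P$ in boundary order, maintaining a doubly-linked list $L$ of the vertices seen so far, sorted by $x$-coordinate, together with a pointer $\pi$ to the slot occupied by the most recently inserted vertex. When the next boundary vertex $v$ arrives, I start at $\pi$ and scan $L$ in the correct direction until the insertion position for $v$ is located, insert $v$, and update $\pi$ to its new slot. The cost of the algorithm is dominated by the total number of comparisons, and the key claim is that each fixed vertex $v_j$ with $x$-coordinate $x_j$ is compared at most $2k-1$ times. The reason is structural: a comparison with $v_j$ occurs only when the search pointer sweeps past the slot of $v_j$, and between two consecutive insertions this happens exactly when the polygon edge $v_iv_{i+1}$ that is being processed crosses the vertical line $\ell_j\colon x=x_j$. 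Since $P$ is $k$-convex, $\ell_j \cap P$ has at most $k$ connected components, so $\ell_j$ crosses $\partial P$ at most $2k$ times, bounding the number of comparisons involving $v_j$ by $2k$ and the total work by $O(kn)$.

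For (ii) I would perform a left-to-right plane sweep on the $x$-sorted vertex list. The status structure stores those edges of $P$ currently met by the sweep line, sorted by intersection ordinate. By $k$-convexity this structure always contains at most $2k$ edges, so a plain sorted array of length $O(k)$ suffices: inserting or deleting the two edges incident to the current vertex, and locating its upper and lower neighbours in the structure, cost $O(k)$ per event. Shooting a horizontal chord from each vertex to the two neighbouring edges of the structure produces a vertical trapezoidation of $P$ in total time $O(kn)$ and space $O(n)$. The resulting trapezoidation refines $P$ into a union of $y$-monotone pieces, which are triangulated in overall linear time by the Fournier--Montuno sweep~\cite{FM}; this stage is subsumed by the $O(kn)$ bound.

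The main obstacle is really the comparison bound in step (i): the $O(kn)$ estimate requires that every insertion truly resume from the previous vertex's position, so the invariant ``$\pi$ points to the slot of the previously processed boundary vertex'' must be preserved across updates; otherwise one cannot charge each comparison to a crossing of $\partial P$ with $\ell_j$. Once this invariant is in place, the charging argument is clean and the remaining sweep and triangulation steps are routine applications of known techniques, with the $O(k)$ sweep-line bound inherited directly from the definition of $k$-convexity.
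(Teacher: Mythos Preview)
Your argument is essentially the same as the paper's: insertion sort along $\partial P$ with a finger at the previous insertion point, charging each comparison at $v_j$ to a crossing of $\partial P$ with the vertical line $x=x_j$ (hence at most $2k$ per vertex), followed by a plane sweep whose status structure has size $O(k)$ and thus costs $O(k)$ per event. One small slip: in a left-to-right sweep producing a vertical trapezoidation, the chords you shoot from each vertex to its neighbours in the status structure are \emph{vertical}, not horizontal; otherwise the write-up matches the paper's proof with somewhat more detail supplied.
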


Using suitable data structures, a faster yet still
implementable algorithm is possible, as we show next.
Call a polygon $k$-\emph{monotone} if every vertical line intersects
the interior of the polygon in $k$ intervals.
(This property is implied by $k$-convexity, and is equivalent to $x$-monotonicity for $k=1$.)
Actually, we do not even need the polygon to be simple, we just need a sequence of $x$-coordinates such that every other $x$-coordinate comes between at most $2k$ consecutive pairs of $x$-coordinates.

\begin{lemma}
\label{lem:sort}
The vertices of any $k$-monotone polygon can be $x$-sorted in $O(n \log (2+k))$ time.
\end{lemma}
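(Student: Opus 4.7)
The plan is to speed up the insertion sort in the proof of Proposition~\ref{p:triang} by storing the sorted prefix in a \emph{finger search tree} instead of a linked list. A finger search tree (for example, the level-linked $(2,4)$-tree of Guibas, McCreight, Plass, and Roberts) allows one to locate, and then insert, an element at rank-distance $d$ from a given finger in $O(\log(1+d))$ time, while still supporting all standard balanced-tree updates.

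The algorithm will scan the vertices $v_1,\ldots,v_n$ along the boundary and maintain a tree $T_i$ holding the sorted $x$-coordinates of $v_1,\ldots,v_i$, together with a finger pointing at the node for $x(v_i)$. To process $v_{i+1}$, we perform a finger search from this pointer for $x(v_{i+1})$, insert it, and advance the finger; writing $d_i$ for the rank-distance in $T_i$ between $x(v_i)$ and the insertion point, this step costs $O(\log(1+d_i))$. The total running time is therefore
\begin{equation*}
O\!\left(\sum_{i=1}^{n-1}\log(1+d_i)\right).
\end{equation*}

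The crux is a double-counting bound on $\sum_i d_i$. Let $D_i$ denote the rank-distance between $x(v_i)$ and $x(v_{i+1})$ in the sort of the \emph{full} sequence; because $T_i$ contains only a subset of the vertices, $d_i\le D_i$. Now $D_i$ equals exactly the number of indices $j$ for which $x(v_j)$ lies strictly between $x(v_i)$ and $x(v_{i+1})$, so exchanging the order of summation yields
\begin{equation*}
\sum_{i=1}^{n-1} D_i \;=\; \sum_{j=1}^{n}\bigl|\{\,i:\,x(v_j)\text{ lies between }x(v_i)\text{ and }x(v_{i+1})\,\}\bigr|,
\end{equation*}
and the $k$-monotonicity hypothesis stated in the remark preceding the lemma is precisely that every inner count is at most $2k$, hence $\sum_i D_i\le 2kn$. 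By the concavity of the logarithm (Jensen's inequality),
\begin{equation*}
\sum_{i=1}^{n-1}\log(1+d_i)\;\le\;\sum_{i=1}^{n-1}\log(1+D_i)\;\le\;(n-1)\,\log\!\left(1+\tfrac{2kn}{n-1}\right)\;=\;O\bigl(n\log(2+k)\bigr),
\end{equation*}
which is the claimed bound.

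The main obstacle I foresee is not the data-structural machinery---finger search trees are standard---but making the double-counting step above match the lemma's \emph{minimal} hypothesis (a purely combinatorial condition on the sequence of $x$-coordinates, rather than the geometric $k$-monotonicity of a polygon) and handling repeated $x$-coordinates, which I would deal with by breaking ties via the boundary index of the vertex so that ranks are well defined throughout the insertion process.
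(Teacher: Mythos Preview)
Your proof is correct and follows essentially the same approach as the paper's: both use a search structure with the dynamic-finger property, bound $\sum_i d_i = O(nk)$ by the same double-counting (the paper phrases it geometrically as ``projecting all vertices onto all edges,'' which is exactly your count of indices $j$ with $x(v_j)$ between $x(v_i)$ and $x(v_{i+1})$), and then appeal to concavity of the logarithm. Your explicit intermediate step $d_i \le D_i$ and the tie-breaking remark are, if anything, slightly more careful than the paper's presentation.
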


\begin{proof}
  We use a binary insertion sort, in which we add the points in order   along the polygon into a balanced binary search tree.  The binary   search tree has the \emph{dynamic finger property}: inserting an   element that has rank $r$ different from the previously inserted   element costs only $O(\log (2+r))$ time.  (For example, splay trees~\cite{ST85}
  and Brown \& Tarjan finger trees~\cite{BT80} both have this property.)  Once the elements are inserted into the binary search tree, we simply perform a linear-time in-order traversal to extract them in sorted order.
\medskip

Now we find the bound on the total cost of the insertions.  When we insert an   element of rank difference $r$ from the previously inserted element,   we can charge this cost to $r$ points formed from projecting all   vertices onto all edges of the polygon.  There are at most $O(n k)$  such points of projection, so the total number of charges is at most
  $O(n k)$.  Thus the total insertion cost is $O(\sum_{i=1}^n \log (2+r_i))$ where $\sum_{i=1}^n r_i = O(n k)$.  Such a sum is  maximized when the $r_i$ are all roughly equal, which means that they are all $\Theta((n k) / n) = \Theta(k)$.  Therefore the total cost is at most $O(n \log (2+k))$.
\end{proof}

To see that this bound is optimal in the comparison model,
consider the case in which the polygon is a comb with $k$ tines. Then sorting the $x$-coordinates is equivalent to merging $k$ sorted sequences of length $n/k$, which is known to take $\Theta(n \log k)$ time in the worst case.

\medskip

Lemma~\ref{lem:sort} yields a fast triangulation method for general \mbox{$k$-convex} polygons. We first sort the vertices of the \mbox{$k$-convex} polygon~$P$ in a fixed direction, in $O(n \log k)$ time. Again, a plane sweep is used to compute a triangulation of~$P$. As the intersection of~$P$ with the sweep line is of complexity~$O(k)$ only, by the \mbox{$k$-convexity} of~$P$, each of the $n$ vertices of~$P$ can be processed in~$O(\log k)$ time during the sweep.

\begin{theorem}
  \label{t:triang} Any \mbox{$k$-convex} polygon can be triangulated in
  $O(n \log k)$ time and $O(n)$ space.
\end{theorem}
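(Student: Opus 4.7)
The plan is to combine Lemma~\ref{lem:sort} with a standard plane sweep, taking care that both the sorting step and the sweep itself exploit the fact that the sweep line never meets more than $k$ components of~$P$.

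First, I would invoke Lemma~\ref{lem:sort} to sort the vertices of~$P$ along a fixed direction (say, the $x$-direction) in $O(n \log k)$ time. This is legitimate because $k$-convexity implies that every vertical line meets the interior of~$P$ in at most $k$ intervals, so~$P$ is in particular $k$-monotone in the sense of the lemma. This sorting step dominates the time budget.

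Next I would run a left-to-right plane sweep to build a vertical trapezoidation of~$P$, following the classical approach of~\cite{CI,FM}. The sweep status is the set of edges of~$P$ currently crossed by the sweep line, stored in a balanced binary search tree keyed by $y$-coordinate. By $k$-convexity, this status has size $O(k)$ at every moment of the sweep, so each of the three standard event operations (inserting a pair of edges at a left-endpoint vertex, deleting a pair at a right-endpoint vertex, or swapping two adjacent edges at a reflex vertex) costs $O(\log k)$ time with such a tree. Because the sorted order of vertices is already available from the first step, there is no event queue to maintain and no $\log n$ factor creeps in. Summing over the $n$ vertices, the sweep produces the trapezoidation in total time $O(n \log k)$.

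Finally, each trapezoid has at most four vertices and so can be triangulated in $O(1)$ time; because there are $O(n)$ trapezoids, the conversion from the trapezoidation to a full triangulation of~$P$ takes $O(n)$ additional time. The space used by the sorted vertex list, the sweep status, and the output triangulation is $O(n)$ throughout. Adding the three stages yields the announced $O(n \log k)$ time and $O(n)$ space bounds.

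The only real obstacle is the bookkeeping required to make the $O(\log k)$ per-event bound airtight: one must argue that a balanced BST on at most $O(k)$ active edges suffices and that, since the vertex order is precomputed in the sorting phase, no priority queue of size~$n$ is ever needed. Everything else is a direct specialization of the textbook trapezoidation algorithm to the $k$-bounded setting.
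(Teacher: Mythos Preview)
Your proposal is correct and follows essentially the same approach as the paper: invoke Lemma~\ref{lem:sort} to $x$-sort the vertices in $O(n\log k)$ time, then run a plane sweep exploiting that the sweep line meets at most $O(k)$ edges, so each event costs $O(\log k)$. The paper's own argument is terser (it does not spell out the trapezoidation step or the absence of an event queue), but the idea and the accounting are identical.
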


\section{Two-Convex Polygons}
\label{2convex}

\subsection{Characterization}
\label{2character}

In this section we give a characterization of \mbox{$2$-convex} polygons that allows their recognition in time \mbox{$O(n \log
n)$}, and a description of their structure that will be used later in several of our results.

\medskip

We observe that  \mbox{$k$-convexity} is a property
that may be lost by small perturbations on the positions
of the vertices of a polygon. For example, small changes in the positions of the vertices $p_1, \ldots, p_4$
in Figure~\ref{guard2}, could yield a $1$- or $2$-convex polygon. As a consequence, stabbing a polygon along its edges will not, in most cases, give enough information for deciding
its \mbox{$k$-convexity}.

\begin{figure}[htbp!]
\begin{center}
\includegraphics[width=7.5cm]{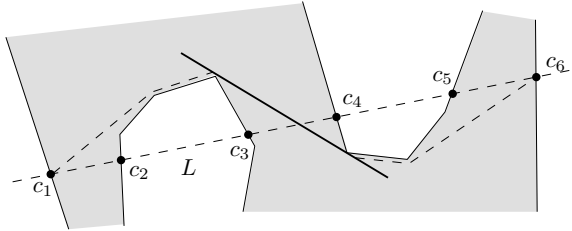}
\end{center}
\caption{Constructing an inner tangent} \label{stabber1}
\end{figure}

\medskip

Let~$P$ be a simple polygon, and denote its boundary by $\partial P$.
A line $L$ is called a \mbox{$j$-\emph{stabber}}
of $P$ if $L$ crosses $\partial P$ at least~$j$ times.  Note that a \mbox{$j$-stabber} may contain whole edges of $P$; these are \emph{not} considered to contribute to the count. An
edge of a polygon $P$ is called an \emph{inflection edge} if it joins a convex and a reflex vertex of~$P$. An \emph{inner tangent} of~$P$ is a line segment~\mbox{$T \subset P$} that contains two non-adjacent reflex vertices of $P$ in its relative interior; see Figure~\ref{stabber1}.

\begin{lemma}
\label{char}
A simple polygon~$P$ is \mbox{$2$-convex} if and only if~$P$ has no inner tangent, and no \mbox{$3$-stabber} that contains an inflection edge.
\end{lemma}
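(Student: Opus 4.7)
I would prove the equivalence in each direction separately, by contrapositive.

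\emph{Necessity.} Suppose $P$ is 2-convex. For the inner-tangent condition, let $L$ support an inner tangent $T$ containing reflex vertices $r_1,r_2$ in its relative interior. Since $T\subset P$ and each $r_i$ is reflex, the line $L$ lies in the interior reflex wedge at $r_i$, so the exterior wedge at $r_i$ sits on one of the two sides of $L$. I would first show that both exterior wedges lie on the \emph{same} side of $L$: were they on opposite sides, continuity of $\partial P$ would force an intermediate meeting of $\partial P$ with $L$ (transversal crossing, or convex tangency) between $r_1$ and $r_2$ that puts part of $T$ outside $P$, contradicting $T\subset P$. Then shifting $L$ parallel to itself into this common exterior side yields a line $L'$ that enters both exterior wedges, producing two disjoint exterior pockets along $L'$; hence $L'\cap P$ has at least three components, contradicting 2-convexity. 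For the 3-stabber condition, let $L$ contain an inflection edge $e=cr$ (with $c$ convex and $r$ reflex) and witness the 3-stabber via additional transversal crossings. At $c$, the line past $c$ (opposite to $r$) locally leaves $P$; at $r$, the line past $r$ (opposite to $c$) locally stays in $P$. A small rotation of $L$ about $r$ into the reflex interior wedge replaces the boundary overlap along $e$ by extra transversal crossings near $c$; combined with the original crossings this forces at least three components in the perturbed line's intersection with $P$, again contradicting 2-convexity.

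\emph{Sufficiency.} I argue the contrapositive. Assuming $P$ is not 2-convex, choose a line $L_0$ with at least three components in $L_0\cap P$. I would continuously deform $L_0$ while monitoring the number of components; the count is locally constant and drops only at \emph{critical} lines of two kinds: (i) lines tangent from inside to $\partial P$ at some reflex vertex, where two consecutive components merge at that vertex, and (ii) lines containing a whole edge of $\partial P$. Reduce the component count step by step from three toward two. If the reduction first yields a critical line with two reflex-vertex tangencies lying in a single component of its intersection with $P$, that line supports a segment $T\subset P$ through both reflex vertices in its relative interior, i.e., an inner tangent. If instead the reduction first yields a critical line containing an edge $e$, the drop in components forces $e$ to have one convex and one reflex endpoint---an inflection edge---and the remaining transversal crossings of that line with $\partial P$ witness the 3-stabber through $e$.

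\textbf{Main obstacle.} The most delicate step is the sufficiency argument: arranging the deformation so that the first component-reducing critical event is always of type (i) or (ii), distinguishing those two cases rigorously, and checking that repeated type-(i) events eventually yield two tangencies lying inside one common component rather than in distinct components (the latter would not give an inner tangent). In necessity, the subtle point is establishing the ``same-side'' claim for the two exterior wedges at $r_1$ and $r_2$, which requires a careful local-to-global topological argument using $T\subset P$ and the simple connectedness of $P$.
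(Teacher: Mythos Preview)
Your necessity argument for the inner-tangent case contains a concrete error: the ``same-side'' claim is false. Take the rectangle $[0,3]^2$ with a small triangular notch removed from the top edge (reflex tip $r_1=(1,2.5)$) and another from the bottom edge (reflex tip $r_2=(2,0.5)$). The segment through $r_1,r_2$ extends slightly past both while staying in $P$, so it is an inner tangent; yet the exterior wedge at $r_1$ lies above the supporting line $L$ and the one at $r_2$ lies below. Your continuity argument fails because $\partial P$ need not meet $L$ \emph{between} $r_1$ and $r_2$---in this example it meets $L$ only outside that subsegment. A parallel shift of $L$ therefore produces only a $4$-stabber. The fix is to allow a small \emph{rotation} of $L$ (about the midpoint, say), which simultaneously enters both exterior wedges regardless of which sides they occupy; the paper's one-line remark that ``a $6$-stabber cutting off $u$ and $v$ exists in the vicinity of $T$'' implicitly covers this.

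For sufficiency your route is genuinely different from the paper's. The paper does not deform the stabbing line at all: given a $6$-stabber with consecutive crossings $c_1,\dots,c_6$, it distinguishes two combinatorial crossing patterns, and in each builds the witness directly---in one pattern the geodesic in $P$ from $c_1$ to $c_6$ already contains an inner tangent, and in the other an inflection edge between $c_1$ and $c_3$ supports a $3$-stabber. This is short and constructive. Your line-deformation scheme could in principle be made to work, but the obstacles you flag are real and you have not addressed them: you have not argued that one can always move $L_0$ so that the component count is nonincreasing (it can jump up as well as down), nor that a component-reducing edge event must involve an \emph{inflection} edge (an edge with two reflex endpoints also causes a drop), nor how two separate type-(i) tangencies in distinct components eventually coalesce into a single segment in $P$. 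As written, the sufficiency half is a plausible sketch rather than a proof.
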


\begin{proof}
  Consider the `only if' implication first. Suppose that for some inflection edge~$e$ of~$P$, its supporting line~$L$ is a  \mbox{$3$-stabber} for~$P$. Then a slight perturbation of~$L$ gives rise to a line $L'$ that crosses $\partial P$ at least two more times than $L$ does. One of these extra crossings lies in the interior  of $e$, and one on the second edge of $P$ incident to the reflex vertex of $e$.
  Thus~$P$ admits a \mbox{$6$-stabber}, and therefore $P$
  cannot be \mbox{$2$-convex}.  Similarly, if~$P$ has some inner tangent~$T$, defined by reflex vertices~$u$ and~$v$, say, then a \mbox{$6$-stabber} cutting off~$u$ and~$v$ exists in the vicinity of~$T$.

\begin{figure}[htbp!]
\begin{center}
\includegraphics[width=8.2cm]{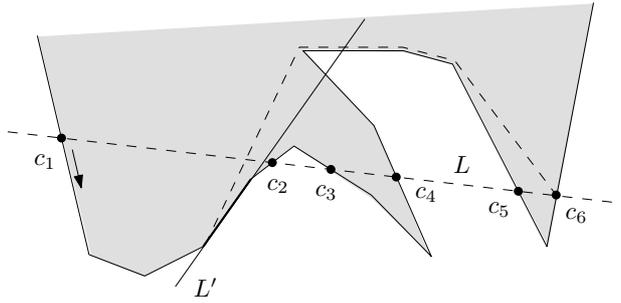}
\end{center}
\caption{Constructing a $3$-stabber}
\label{stabber2}
\end{figure}

To prove the `if' implication, assume that~$P$ is not
\mbox{$2$-convex}.  Then there exists a \mbox{$6$-stabber}
$L$ of $P$. Let \mbox{$c_1,\ldots,c_6$} denote six
consecutive crossings of~$L$ with~$\partial P$ such that the line segment joining $c_1$ to $c_2$ is contained in $P$. Two types of crossing pattern arise, as shown in Figures~\ref{stabber1} and~\ref{stabber2}, respectively. In the former case, we simply connect the two crossings~$c_1$ and~$c_6$ with a geodesic path inside~$P$ (dashed), and obtain an inner tangent (bold). In the latter case, if an inner tangent exists locally, then we can construct it in a similar way, possibly by moving the geodesic's endpoints from~$c_1$ and~$c_6$ closer to each other along~$\partial
P$. Otherwise (as Figure~\ref{stabber2} illustrates) we can find an inflection edge (bold) between~$c_1$ and~$c_3$ whose supporting line~$L'$ crosses~$P$'s boundary twice between $c_4$ and~$c_5$. But the line~$L'$~has to cross~$\partial P$ once more, and thus it is a \mbox{$3$-stabber} containing an inflection edge, which is not possible.
\end{proof}

\subsection{Recognition}
\label{2recognition}

Suppose that we want to decide if a polygon $P$ is
\mbox{$2$-convex}. (Assume that $P$ is not convex; the problem is trivial, otherwise.)
Our recognition algorithm is based on
Lemma~\ref{char}.  We look for inner tangents and \mbox{$3$-stabbers} at inflection edges.
To this end, we first shoot two rays at each reflex vertex $v$
of $P$ in the directions determined by the edges of $P$
incident with $v$. This can be done in $O(n \log n)$~\cite{CEGGHSS}.
If $\partial P$ is intersected more than once by any
of these rays, then a \mbox{$6$-stabber} exists, and we report
that~$P$ is not \mbox{$2$-convex}.

Suppose then that each ray shot at the reflex vertices
of $P$ yields a \emph{unique} intersection point with~$\partial P$. We store the points of intersection and use them to check for inner tangents. Define, for each reflex vertex~$v$ of~$P$, its \emph{critical range} $C(v)$ as the set of all points~\mbox{$x   \in \partial P$} such that $\overline{xv}$ can be prolonged to a line segment tangent to~$P$ at~$v$. Note that such a segment need not lie entirely in~$P$. However, $C(v)$ consists of exactly two connected intervals on~$\partial P$ whose endpoints are among the stored points obtained from ray shooting. See Figure~\ref{criticalrange}, where $C(v)$ is drawn with bold lines.

\begin{figure}[h]
\begin{center}
\includegraphics[width=6.5cm]{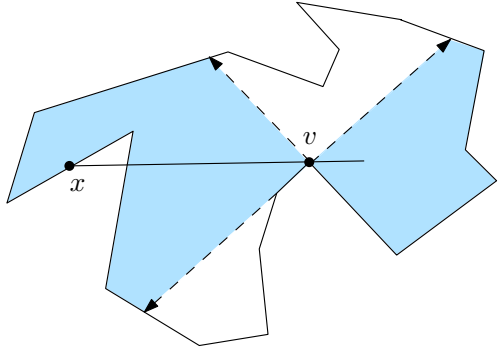}
\end{center}
\caption{Critical range for vertex~$v$}
\label{criticalrange}
\end{figure}

\begin{lemma}
  $P$ admits an inner tangent if and only if $P$ has two reflex   vertices~$v$ and~$v'$ such that \mbox{$v \in C(v')$} and \mbox{$v'\in C(v)$}.
\end{lemma}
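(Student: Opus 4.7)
The plan is to prove the two directions separately; the crux is to read the right visibility content out of the two-arc description of $C(v)$ given just before the lemma.

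For the forward direction, I would start with an inner tangent $T \subset P$ passing through two non-adjacent reflex vertices $v$ and $v'$, both lying in the relative interior of $T$. Since $\overline{vv'} \subset T \subset P$ and $T$ extends a small amount past each of $v$ and $v'$ while staying in $P$, the line $L = vv'$ is tangent to $P$ at both vertices, in the sense that at each of $v, v'$ the two opposite directions along $L$ lie inside the corresponding reflex cone. In addition, the open segment $(v, v')$ does not meet $\partial P$, so seen from $v$ the vertex $v'$ is the first intersection of $\partial P$ with the tangent ray through $L$ in the direction of $v'$, placing $v'$ on one of the two arcs of $C(v)$; hence $v' \in C(v)$. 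A symmetric argument at $v'$ gives $v \in C(v')$.

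For the backward direction, assume that $v$ and $v'$ are reflex vertices with $v \in C(v')$ and $v' \in C(v)$. From $v' \in C(v)$ together with the characterization of $C(v)$ as the two first-visibility arcs swept by the tangent rays from $v$ (with endpoints exactly the ray-shot hits along the two edge extensions at $v$), I would extract two facts: first, the line $L = vv'$ is tangent to $P$ at $v$, i.e.\ both directions along $L$ at $v$ lie in the reflex cone at $v$; second, the open segment $(v, v')$ is contained in the interior of $P$, since otherwise $v'$ could not sit on a first-visibility arc from $v$ along $L$. The symmetric membership $v \in C(v')$ gives tangency of $L$ at $v'$. Using these two local tangencies, I would prolong $\overline{vv'}$ by a tiny amount past each endpoint, producing a segment $T \subset P$ containing $v$ and $v'$ in its relative interior; the vertices $v, v'$ are non-adjacent, because an edge direction lies on the boundary of the reflex cone and is therefore excluded from the (open) arcs of $C$. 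The segment $T$ is then an inner tangent.

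The delicate step is unpacking the visibility half of $v' \in C(v)$: without it, a mutually critical pair could a priori have $\overline{vv'} \not\subset P$ and the backward direction would collapse. The two-interval structure of $C(v)$---its arcs being bounded precisely by the hit points of the two edge-extension rays---is what forces each arc to be the locus of first-visible boundary points along tangent rays from $v$, and hence forces $\overline{vv'} \subset P$ whenever $v'$ lies on such an arc. Once this visibility content is pinned down, the rest of both implications is a routine local extension argument near $v$ and $v'$.
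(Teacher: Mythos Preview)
Your forward direction is fine (indeed over-argued: by the paper's definition, $v'\in C(v)$ follows immediately from the fact that the line $vv'$ is locally tangent at $v$; no first-visibility is required).

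The backward direction has a real gap. You assert that $v'\in C(v)$ forces the open segment $(v,v')$ to lie in the interior of $P$, and you justify this by reading $C(v)$ as the locus of \emph{first-visible} boundary points along tangent rays from $v$. That reading is incorrect. The paper defines $C(v)$ purely directionally---$x\in C(v)$ means the line through $x$ and $v$ is tangent to $P$ at $v$---and explicitly warns that ``such a segment need not lie entirely in~$P$.'' The two-interval structure you invoke only says that $\partial P$ enters and leaves each tangent wedge exactly once (a consequence of the standing unique-ray-hit assumption); it does \emph{not} prevent the boundary arc inside that wedge from folding back so that some ray from $v$ meets it several times. So mutual membership $v\in C(v')$, $v'\in C(v)$ can hold while $\overline{vv'}\not\subset P$, and in that situation your local-extension argument produces nothing.

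The paper's proof anticipates exactly this failure. It takes the geodesic $\pi$ from $v$ to $v'$ inside $P$; if $\pi$ is straight, your argument goes through, but if $\pi$ bends at reflex vertices $w_1,\ldots,w_m$, the paper shows that $\overline{vw_1}$ (which \emph{is} in $P$, being a geodesic edge) can be extended to an inner tangent at $v$ and $w_1$, using the unique-ray-hit hypothesis to confine $\pi$. Note that the resulting inner tangent is not between $v$ and $v'$ at all---the lemma only asserts that \emph{some} inner tangent exists. Your attempt to force the tangent to sit on $v$ and $v'$ themselves is thus aiming at a stronger (and false) conclusion.
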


\begin{proof}
  Clearly, if two vertices~$v$ and~$v'$ define an inner tangent   for~$P$, then \mbox{$v \in C(v')$} and \mbox{$v' \in C(v)$} hold.   Conversely, assume that both inclusion conditions are   fulfilled. Consider the geodesic path $\pi$ between~$v$ and~$v'$ inside~$P$. Either $\pi$ is the line segment~$\overline{vv'}$, and thus can be extended to an inner tangent at~$v$ and~$v'$, or $\pi$ detours via reflex vertices~$w_1,\ldots,w_m$. In the latter case (which is illustrated in Figure~\ref{detour}), we can extend the
  line segments~$\overline{vw_1}$ and~$\overline{w_mv'}$ to inner tangents, because ray shooting at reflex vertices of~$P$ has led to unique intersection points with~$\partial P$, and thus the path~$\pi$ does not cross the four (dashed) rays depicted in the figure.
\end{proof}

\begin{figure}[h]
\begin{center}
\includegraphics[width=6cm]{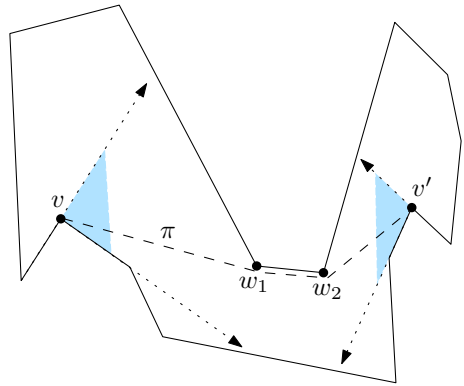}
\end{center}
\caption{Geodesic path between~$v$ and~$v'$}
\label{detour}
\end{figure}

The strategy for detecting inner tangents is now clear. First we associate to each reflex vertex $v$ two intervals, each containing a set of consecutive vertices of $P$ lying
within the critical range of $v$.  We then choose a point $x$ in $\partial P$ and calculate the set $R(x)$ of reflex vertices  containing $x$ in their critical ranges. We then slide $x$ along $\partial P$, maintaining the set~$R(x)$ in a range search tree. $R(x)$ has to be updated each time~$x$ passes over a point of intersection of $\partial P$ with a ray
shot from a reflex vertex of $P$. Moreover,
when $x$ reaches some reflex vertex~$v$ of~$P$, we
check in logarithmic time whether the intervals previously
associated to $v$ contain any element in $R(v)$.
Thus this phase takes $O(n \log n)$ time.

Since range search trees can be implemented in linear space, we have the following result:

\begin{theorem}
\label{decide}
Deciding if a simple polygon $P$ with $n$ vertices is
\mbox{$2$-convex} can be done in \mbox{$O(n \log n)$} time and $O(n)$~space.
\end{theorem}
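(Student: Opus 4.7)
The plan is to establish Theorem~\ref{decide} by implementing the characterization in Lemma~\ref{char}: $P$ is 2-convex iff $P$ has neither an inner tangent nor a 3-stabber supporting an inflection edge. The algorithm takes these two obstructions in turn and uses known $O(n\log n)$ machinery for each.

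\textbf{Phase 1: inflection-edge 3-stabbers.} At every reflex vertex $v$, shoot the two rays emanating from $v$ in the directions of the two edges of $P$ incident with $v$, extended past $v$ into the interior of $P$. A simple polygon admits these $O(n)$ ray-shooting queries in total time $O(n\log n)$ and $O(n)$ space via the preprocessing of~\cite{CEGGHSS}. If some ray meets $\partial P$ more than once, the line supporting the corresponding edge already crosses $\partial P$ at least three times, and the perturbation argument in the proof of Lemma~\ref{char} upgrades this to a 6-stabber---so we reject $P$ as not 2-convex.

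\textbf{Phase 2: inner tangents.} If Phase~1 passes, every ray yields a unique intersection point, and these points (together with the vertices of $P$) are used to express each critical range $C(v)$ as two arcs of $\partial P$. By the preceding lemma, an inner tangent exists iff some two reflex vertices satisfy $v \in C(v')$ and $v' \in C(v)$ simultaneously. I detect such a pair by sweeping a parameter $x$ along $\partial P$ while maintaining
\[ R(x) \;=\; \{v \colon v \text{ reflex and } x \in C(v)\} \]
in a balanced BST keyed by the cyclic position of its elements. Each of the $O(n)$ arc endpoints of critical ranges is an event that inserts or deletes a reflex vertex in $R(x)$ at cost $O(\log n)$, and the event list is presorted in $O(n\log n)$ time. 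Whenever $x$ reaches a reflex vertex $v$, I perform an interval-emptiness query on the BST to check whether the current $R(x)$ meets the (at most two) intervals making up $C(v)$; such a query costs $O(\log n)$. A witness $v' \in R(x) \cap C(v)$ satisfies $v \in C(v')$ by definition of $R$ and $v' \in C(v)$ by the query, so $(v,v')$ certifies an inner tangent and we reject. If no query succeeds and no ray fails Phase~1, Lemma~\ref{char} lets us declare $P$ to be 2-convex.

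Summing up, the ray-shooting preprocessing, an event sort, and $O(n)$ BST operations each costing $O(\log n)$ give an $O(n\log n)$-time, $O(n)$-space procedure. The main obstacle I anticipate is the careful construction of the critical ranges from the ray-shooting output: identifying the correct four arc endpoints of each $C(v)$ and reconciling degeneracies (collinear reflex vertices, rays grazing other vertices) is needed so that the sweep events are unambiguous and $C(v)$ is really represented as two cyclic intervals. Once that bookkeeping is right, all remaining steps are standard textbook operations on balanced trees.
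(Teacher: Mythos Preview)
Your proposal is correct and follows essentially the same approach as the paper: ray shooting at reflex vertices via~\cite{CEGGHSS} to handle inflection-edge $3$-stabbers, followed by a sweep along $\partial P$ maintaining the set $R(x)$ in a balanced search structure and querying at each reflex vertex against its critical range, exactly as in the paper's recognition algorithm. The only differences are cosmetic (you say ``balanced BST'' where the paper says ``range search tree''), and the bookkeeping caveat you flag about constructing $C(v)$ from the ray-shooting output is likewise left implicit in the paper.
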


\subsection{Shape structure}
\label{2convexsubsets}

We have given a geometric characterization of \mbox{$2$-convex} polygons,
in Subsection~\ref{2character}.
The present subsection aims at giving a qualitative description of their shape.

\begin{lemma}
\label{shape} Let $P$ be a \mbox{$2$-convex} polygon. Let
$C=p_0p_1\dots p_t$ be the chain of vertices that connects
(counterclockwise) two consecutive vertices $p_0,p_t$ on the convex hull $CH(P)$. Then $C$ can be partitioned
into three chains $C_1=p_0p_1\dots p_r$, $C_2=p_{r+1}\dots p_s$, and $C_3=p_{s+1}\dots p_t$, for $0 \leq r \leq s < t$, such that all vertices in $C_1$ and $C_3$ are convex (in~$P$), while all vertices in $C_2$ are reflex.
\end{lemma}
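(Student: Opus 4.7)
My plan is to argue by contradiction using Lemma~\ref{char}. Suppose the partition claimed in the statement does not exist. Since $p_0,p_t$ are convex (being on the convex hull), this failure forces the reflex vertices of $P$ on $C$ to split into at least two non-contiguous groups, so there exist indices $a<b<c$ with $p_a,p_c$ reflex and $p_b$ convex. Among all such triples I would pick one with $c-a$ minimal, so that every vertex strictly between $p_a$ and $p_c$ on $C$ is convex.

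The geometric core of the argument is to show that the subchain $\gamma=p_ap_{a+1}\cdots p_c$ is a convex polyline bulging strictly toward the pocket side of the chord $\overline{p_ap_c}$ (the pocket being the region bounded by $C$ and the closing hull edge $p_0p_t$). This holds because each internal vertex of $\gamma$ is a convex vertex of $P$, so the chain turns in the same rotational sense at every such vertex, producing a convex arc on the pocket side. Consequently, the chord lies on the interior side of $\gamma$, and a simple-connectivity argument places $\overline{p_ap_c}$ entirely inside $P$.

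Next I would extend $\overline{p_ap_c}$ past both endpoints to obtain a segment $T\subset P$ containing $p_a$ and $p_c$ in its relative interior. At each reflex vertex the interior cone has angular width exceeding $\pi$ and, under the configuration we have set up, contains both $\overrightarrow{p_ap_c}$ and its reverse; this permits the extension. The resulting $T$ is then an inner tangent of $P$---since $p_a$ and $p_c$ are non-adjacent reflex vertices separated on $C$ by $p_b$---contradicting Lemma~\ref{char} and the $2$-convexity of $P$.

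I expect the extension step to be the main obstacle. For a minimal triple, the neighbor $p_{a-1}$ could sit on the interior side of the chord, placing the reverse direction outside the reflex cone at $p_a$; then the chord cannot be directly extended. The remedy is to refine the choice of reflex pair, moving outward to a pair of reflex vertices whose neighbors on $C$ both lie on the pocket side, so that the reflex cones at the new endpoints accommodate both the forward and reverse directions along the chord. Establishing that such a refined pair always exists when the claimed partition into $C_1,C_2,C_3$ fails---and that the chord between them still lies in $P$---is where the bulk of the geometric work resides.
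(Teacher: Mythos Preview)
Your approach differs substantially from the paper's, and the gap you yourself identify in the extension step is real and, I believe, not repairable along the lines you sketch.

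The paper's argument does not try to build an inner tangent directly from a reflex--convex--reflex pattern. Instead it anchors the analysis at the \emph{deepest} vertex $p_k$ of the pocket, found by a sweep parallel to $p_0p_t$; this $p_k$ is necessarily reflex. It then shows that each half-chain $p_0\cdots p_k$ and $p_k\cdots p_t$ contains at most one inflection edge. The main tool is the \emph{$3$-stabber} half of Lemma~\ref{char}, not the inner-tangent half: if a second inflection edge $p_{j-1}p_j$ appeared between $p_{i+1}$ and $p_k$, one checks that the ray $p_jp_{j-1}$ hits $\partial P$ once while the opposite ray $p_{j-1}p_j$ must cross the subchain $p_k\cdots p_t$ and hence $\partial P$ at least twice, yielding a forbidden $3$-stabber through an inflection edge. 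The inner-tangent condition is invoked only once, in a side argument pinning down the side of the line $p_ip_{i+1}$ on which $p_t$ lies. Choosing the extremal vertex $p_k$ is what makes the stabbing counts work out cleanly.

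Your plan uses only the inner-tangent half of Lemma~\ref{char}, and that is where it runs into trouble. Two concrete issues: (i) the ``simple-connectivity argument'' for $\overline{p_ap_c}\subset P$ is incomplete---the convex region bounded by $\gamma$ and the chord is on the interior side of $\gamma$, but nothing prevents the complementary arc $\partial P\setminus\gamma$ from entering that region through the chord; ruling this out already seems to require $2$-convexity in a nontrivial way. (ii) Even granting the chord lies in $P$, the reflex cone at $p_a$ need not contain the direction $\overrightarrow{p_cp_a}$; your remedy of ``moving outward'' to better reflex endpoints has no termination or existence argument, and in fact there are local configurations where every nearby reflex pair has the same defect. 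I would recommend abandoning the attempt to force an inner tangent and instead using the $3$-stabber criterion, anchored at an extremal reflex vertex as the paper does; the argument becomes short and avoids all of these obstructions.
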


\begin{proof}
If $C_2$ is empty, the lemma is obviously true, so we assume that the chain $C$ contains at least one reflex vertex. Let $p_k$ be the last point of the chain $C$ to be hit in a parallel sweep starting with the line defined by $p_0p_t$ which, without loss of generality, we assume to be horizontal. Observe that $p_k$ is necessarily a reflex vertex. Let us recall that an inflection edge is adjacent to a
reflex vertex and to a convex vertex and that, according to
Lemma~\ref{char}, 2-convex polygons do not have $3$-stabbers
containing an inflection edge. We are going to see that there is at most one inflection edge in the chain $p_0\dots p_k$, and that the same applies to the chain $p_k\dots p_t$. Let $p_ip_{i+1}$ be the first inflection edge starting from $p_0$ (see Figure~\ref{f:ameba}.a). As the directed ray $p_ip_{i+1}$ intersects the chain $p_tp_0$ at least once, the chain $p_{i+1}p_t$ is necessarily to the right of the
oriented line $p_ip_{i+1}$. Observe that $p_t$ is to the right of the oriented line $p_ip_{i+1}$, because otherwise a rotation around $p_i$ starting from $p_{i-1}p_i$ will define an inner tangent, forbidden in 2-convex polygons according to Lemma~\ref{char}. Finally, if we assume that there exists a convex vertex $p_j$ in the chain $p_{i+1}p_k$, then we have that $p_{j-1}p_j$ is an inflection edge supporting a $3$-stabber: the ray $p_jp_{j-1}$ intersects the
polygon at least once, while the ray $p_{j-1}p_j$ intersects the chain $p_kp_t$ and therefore intersects the polygon at least twice.
\end{proof}

Observe that the chains $C_1$ and $C_3$ might be singletons in some cases, and that $C_2$ might be empty. However, the generic aspect of a pocket and the shape of a 2-convex polygon are as shown in Figure~\ref{f:ameba}.b). The next result follows directly:

\begin{corollary} \label{cor3}
If the convex hull of a $2$-convex polygon $P$ has $k$ vertices, then the boundary of $P$ can be decomposed
into $2k$ convex chains.
\end{corollary}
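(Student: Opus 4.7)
The plan is to leverage Lemma~\ref{shape} pocket by pocket and then merge pieces across the hull vertices. First, fix the $k$ convex hull vertices $h_1,\ldots,h_k$ in counterclockwise order; they partition $\partial P$ into $k$ pocket chains $C^{(1)},\ldots,C^{(k)}$, where $C^{(i)}$ runs from $h_i$ to $h_{i+1}$. Applying Lemma~\ref{shape} to each $C^{(i)}$ gives a decomposition into three subchains $C^{(i)}_1,C^{(i)}_2,C^{(i)}_3$ whose vertices are, respectively, convex, reflex, and convex vertices of $P$ (with $C^{(i)}_2$ possibly empty and $C^{(i)}_1$, $C^{(i)}_3$ possibly singletons).

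Next I would note the elementary fact that any maximal stretch of consecutive vertices of $P$ that are all convex (or all reflex) forms a convex polyline, because the signed exterior turn angle has the same sign at every interior vertex of the stretch. Consequently each of the three pieces $C^{(i)}_j$ is already a convex chain, yielding an \emph{a priori} decomposition of $\partial P$ into at most $3k$ convex chains.

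To improve the count from $3k$ to $2k$, I would merge, at each hull vertex $h_{i+1}$, the tail $C^{(i)}_3$ (which ends at $h_{i+1}$) with the head $C^{(i+1)}_1$ (which starts there). The concatenation is a sequence of consecutive vertices of $P$ that are all convex, passing through $h_{i+1}$; since $h_{i+1}$, lying on $CH(P)$, is \emph{a fortiori} a convex vertex of $P$, the turn angles along the merged chain all share the same sign, so the merged chain is convex. Performing one such merge at each of the $k$ hull vertices eliminates exactly $k$ chains, leaving $3k-k = 2k$ convex chains, as desired.

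The only real obstacle is bookkeeping around degenerate pockets: if $C^{(i)}_2$ is empty, the pocket is already a single convex chain, and when several consecutive pockets collapse this way the merged chain can absorb many hull vertices; similarly, singleton $C^{(i)}_1$ or $C^{(i)}_3$ pieces need to be absorbed rather than counted separately. In every such case the total can only decrease, so the bound $2k$ is preserved; the cleanest way to phrase the argument is therefore as an upper bound obtained from the $3k$-piece decomposition by the $k$ prescribed merges.
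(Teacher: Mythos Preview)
Your argument is correct and is essentially the intended one: the paper states that the corollary ``follows directly'' from Lemma~\ref{shape} without spelling out a proof, and what you have written is precisely the natural way to fill in that step---each pocket contributes one reflex run and two convex runs, and the convex runs on either side of a hull vertex merge into a single convex chain, giving $k$ reflex chains and $k$ convex chains.
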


\begin{figure}[htbp!]
\begin{center}
\includegraphics[width=\textwidth]{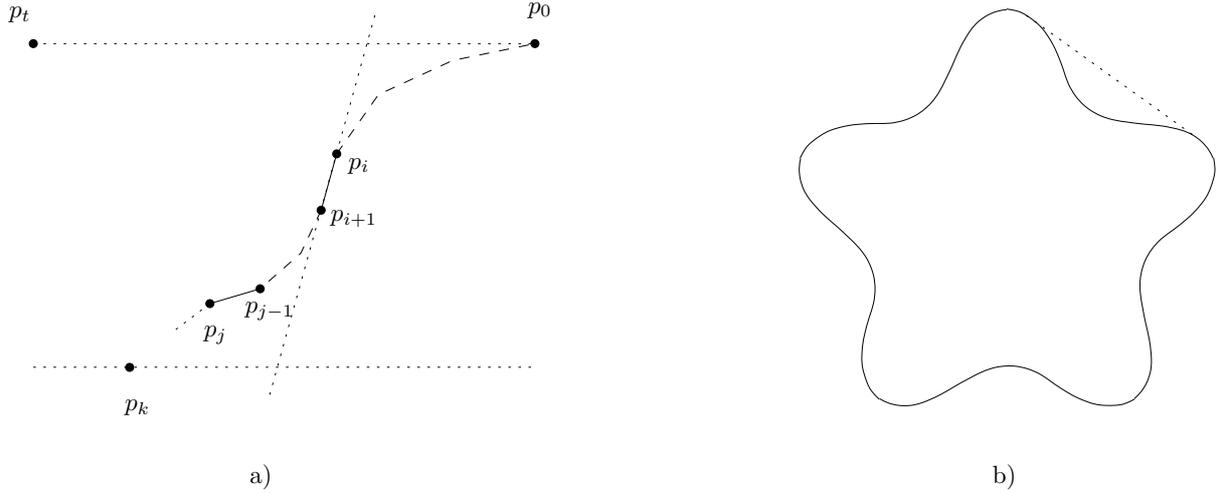}
\end{center}
\caption{Illustration for the proof of Lemma~\ref{shape}.}
\label{f:ameba}
\end{figure}

The Erd\H{o}s-Szekeres Theorem says that every set of $n$ points in general position contains at least $\log n$ points that are in convex position, and that this value is asymptotically tight~\cite{ES}. As every point set can be `polygonized', one cannot expect a better value when the points are chosen from the set of vertices of an arbitrary polygon.
However, when a point set is the set of vertices of a
\mbox{$2$-convex} polygon, we can improve this bound as
follows.

\begin{theorem}\label{thm:convexsubset}
Every \mbox{$2$-convex} polygon with $n$ vertices has a subset of $\lceil\sqrt{n/2}\rceil$
vertices in convex position. This bound is tight.
\end{theorem}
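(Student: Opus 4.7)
The plan is to play off two convex-position subsets that are automatically available in any 2-convex polygon $P$: the vertex set of the convex hull $CH(P)$, and the vertices of a single convex chain in the boundary decomposition of $P$. Let $k$ denote the number of vertices of $CH(P)$. By Corollary~\ref{cor3}, $\partial P$ splits into $2k$ convex chains; since each such chain is a polygonal curve whose turns all go the same way, its vertex set is in convex position. And of course the $k$ hull vertices are themselves in convex position.

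Since these $2k$ chains carry all $n$ vertices of $P$, the pigeonhole principle gives some chain containing at least $\lceil n/(2k)\rceil$ vertices. Hence $P$ admits a convex-position subset of size at least
\[
\max\bigl\{k,\ \lceil n/(2k)\rceil\bigr\}.
\]
Regarded as a function of a real variable $k>0$, $\max\{k,n/(2k)\}$ attains its minimum $\sqrt{n/2}$ at $k=\sqrt{n/2}$. Consequently, for every positive integer $k$ one has $\max\{k,\lceil n/(2k)\rceil\}\ge\lceil\sqrt{n/2}\rceil$, and this yields the lower bound claimed in the theorem.

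For tightness, the plan is to exhibit a 2-convex polygon where both candidate subsets meet at the bound and no larger convex-position subset can be assembled. A natural template takes a regular $k$-gon with $k=\lceil\sqrt{n/2}\rceil$ vertices as the hull and inserts into each hull edge a narrow, deep pocket of the shape dictated by Lemma~\ref{shape}, distributing the remaining $n-k$ vertices roughly evenly among the sub-chains $C_1,C_2,C_3$ (so that each convex chain in the boundary decomposition has about $\sqrt{n/2}$ vertices) and placing those vertices on sharply curving arcs tucked well inside $CH(P)$. The main obstacle will be to verify that no convex-position subset exceeds $\lceil\sqrt{n/2}\rceil$ vertices once one is allowed to mix hull vertices with vertices from one or several pockets; the key is to choose the pockets narrow enough, deep enough, and with sufficiently spread tangent directions that no attempt to combine vertices from distinct chains, nor to augment hull vertices with two or more pocket vertices in convex position, can succeed.
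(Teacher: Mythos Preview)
Your proposal is correct and follows essentially the same route as the paper: use Corollary~\ref{cor3} to decompose $\partial P$ into $2k$ convex chains, then take the larger of the $k$ hull vertices and the largest chain (via pigeonhole), and for tightness use the amoeba-like construction with $k=\lceil\sqrt{n/2}\rceil$ hull vertices and equal-size chains. The paper's own proof is terser---it simply asserts tightness by reference to the figure---so your explicit acknowledgment that verifying no larger convex subset exists in the tight example requires care is, if anything, more scrupulous than the original.
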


\begin{proof}
By Corollary~\ref{cor3}, the boundary of a $2$-convex polygon
with $k$ vertices on its convex hull can be decomposed into at
most $2k$ convex chains. If $k\geq\lceil\sqrt{n/2}\rceil$, we are done, otherwise one of the $2k$ convex chains necessarily has size at least $\lceil\sqrt{n/2}\rceil$. The amoeba-like example in Figure~\ref{f:ameba}.b), with $k=\lceil\sqrt{n/2}\rceil$ vertices in the convex hull and $2k$ convex chains of equal size shows that this bound is tight.
\end{proof}

We conclude this section with a consequence of the preceding
theorem.

\begin{corollary}\label{cor:convexPartition}
If an $n$-gon is \mbox{$2$-convex}, then its vertices can be
grouped into at most $2\sqrt{2n}$ subsets, each in
convex position.
\end{corollary}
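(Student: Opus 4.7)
The plan is to iterate Theorem~\ref{thm:convexsubset} together with Lemma~\ref{l:subset}. Start with the given $2$-convex $n$-gon $P_0 := P$, and extract a subset $S_1$ of at least $\lceil \sqrt{n/2}\rceil$ vertices in convex position. By Lemma~\ref{l:subset}, the polygon $P_1$ obtained by walking around $\partial P$ and skipping the vertices in $S_1$ is itself $2$-convex, so Theorem~\ref{thm:convexsubset} applies again, yielding a second convex-position subset $S_2$, and so on. The question is how fast this peeling process terminates.

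Let $a_0 = n$, and let $a_{i+1} = a_i - \lceil \sqrt{a_i/2}\rceil$ be an upper bound on the number of vertices remaining after extracting $S_1, \ldots, S_{i+1}$. I would bound the smallest $K$ with $a_K \leq 0$ by tracking how $\sqrt{a_i}$ evolves. From $a_{i+1} \leq a_i - \sqrt{a_i/2}$, taking square roots and using $\sqrt{1-x} \leq 1 - x/2$ for $x \in [0,1]$ gives
\[
\sqrt{a_{i+1}} \;\leq\; \sqrt{a_i}\,\sqrt{\,1 - \tfrac{1}{\sqrt{2 a_i}}\,} \;\leq\; \sqrt{a_i} \;-\; \tfrac{1}{2\sqrt{2}},
\]
valid as long as $a_i \geq 1$. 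Telescoping, $\sqrt{a_K} \leq \sqrt{n} - K/(2\sqrt{2})$, so $K \leq 2\sqrt{2n}$ suffices to drive $a_K$ to zero.

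Putting it together, the iterated extraction produces at most $K \leq 2\sqrt{2n}$ subsets $S_1,\ldots, S_K$, each in convex position, whose union is the full vertex set of $P$, which is the claim.

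The main thing to be careful about is the last few iterations, where the inequality $\sqrt{1-x} \leq 1 - x/2$ is used with $x$ close to $1$; but since each extraction removes at least one vertex (the ceiling ensures $\lceil\sqrt{a_i/2}\rceil \geq 1$ whenever $a_i \geq 1$), the recursion terminates in finitely many steps, and the bound $K \leq 2\sqrt{2n}$ obtained from the asymptotic estimate remains valid after absorbing the $O(1)$ tail into the (already slack) constant.
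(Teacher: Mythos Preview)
Your proof is correct and follows essentially the same approach as the paper: iterate Theorem~\ref{thm:convexsubset} together with Lemma~\ref{l:subset}, and bound the depth of the resulting recursion. The paper phrases the analysis as an induction showing $S(n)\le 2\sqrt{2}\,\sqrt{n}$, while you telescope the inequality $\sqrt{a_{i+1}}\le\sqrt{a_i}-\tfrac{1}{2\sqrt{2}}$; these are equivalent, and your final paragraph's worry is unnecessary since $x=1/\sqrt{2a_i}\le 1/\sqrt{2}$ never approaches~$1$.
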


\begin{proof}

Let $S(n)$ be the number of convex subsets needed to partition the vertex set of a 2-convex polygon with $n$ vertices. We show that $S(n) \leq \alpha \sqrt{n}$ by induction over $n$. The induction base for $n = 3$ is obvious, and valid for any $\alpha \geq 1$. By Theorem 4 we find one convex subset of size at least $\lceil\sqrt{n/2}\rceil$. Moreover, by Lemma~\ref{l:subset}, the remaining points define also a $2$-convex polygons and we have
$S(n) \leq 1+S(\lfloor n -
\sqrt{n/2} \rfloor) \leq 1 + \alpha \sqrt{\lfloor n - \sqrt{n/2}
\rfloor}$, where the last inequality comes from the induction
hypothesis. To prove the lemma it is sufficient to show that $1 +\alpha \sqrt{n - \sqrt{n/2}} \leq \alpha \sqrt{n}$. Standard
manipulation shows that this is true for any $\alpha \geq 2\sqrt{2}$ and any $n \geq 1$.


\end{proof}

%
%
%

\section{General $k$-Convex Domains}
\label{polydomain}

The union or intersection of simple polygons may not be a polygon.
In view of this fact, the issue of how the degree of convexity behaves with
respect to these operations is not meaningful for this class of objects.
In this section, we consider larger classes of sets in~$\mathbb{R}^2$
for which these natural questions may be discussed. We first study some properties
of general (not necessary polygonal) subsets of~$\mathbb{R}^2$.

\begin{lemma}
The union of a $k$-convex set $Q_1$ and an $m$-convex set $Q_2$ is a $(k+m)$-convex set, which is tight.
\end{lemma}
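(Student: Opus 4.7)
The plan is to split into the upper bound and the tightness construction. Both are short; the substance is only in choosing the right bookkeeping.

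For the upper bound, fix an arbitrary line $L$. I would use the set identity
\[
L \cap (Q_1 \cup Q_2) \;=\; (L \cap Q_1) \cup (L \cap Q_2),
\]
so the intersection of $L$ with the union is the union of (at most $k$) intervals of $L$ coming from $Q_1$ and (at most $m$) intervals coming from $Q_2$. The key combinatorial observation is that taking the union of two families of intervals on a line can only \emph{merge} intervals, never split them: each connected component of the resulting union contains at least one of the original intervals from $Q_1$ or $Q_2$, so the number of components is bounded by $k+m$. Since $L$ was arbitrary, $Q_1 \cup Q_2$ is $(k+m)$-convex. I expect this step to be essentially immediate; the only subtlety to note is that this is really a one-dimensional statement about unions of intervals, and works regardless of the shape or connectivity of $Q_1$ and $Q_2$.

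For tightness, I would construct sets realizing the bound. Place $k+m$ pairwise disjoint small convex blobs (say, tiny disks) $B_1, B_2, \ldots, B_{k+m}$ with their centers aligned along a horizontal line, in the order given. Define
\[
Q_1 \;=\; \bigcup_{i \text{ odd}, \, i \le 2k-1} B_i, \qquad Q_2 \;=\; \bigcup_{i \text{ even or } i \ge 2k} B_i,
\]
or any analogous alternating splitting in which $Q_1$ receives exactly $k$ blobs and $Q_2$ receives exactly $m$. Each $Q_j$ is a disjoint union of convex pieces, so every line meets each blob in at most one interval; hence $Q_1$ is $k$-convex and $Q_2$ is $m$-convex. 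The horizontal axis, however, meets $Q_1 \cup Q_2$ in exactly $k+m$ disjoint intervals, one per blob, witnessing that the bound $k+m$ cannot be improved.

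The main ``obstacle'' is really just making sure the upper bound argument is stated cleanly (the merging-not-splitting point) and that the tightness example keeps each $Q_j$ globally $k$- (resp.\ $m$-) convex, not just along the chosen line; using pairwise disjoint convex pieces handles this automatically since every line meets a convex set in a single interval.
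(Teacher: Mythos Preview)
Your proposal is correct and follows essentially the same approach as the paper. For the upper bound the paper counts boundary crossings (at most $2k+2m$, hence at most $k+m$ components), while you count intervals directly via the merging observation; these are equivalent one-line arguments. For tightness the paper just says ``take $Q_1$ and $Q_2$ disjoint with a common line realizing $k$ and $m$ components,'' which is exactly your blob construction made concrete (your index formula is slightly off when $m<k$, but as you note any partition into $k$ and $m$ disjoint convex pieces along the line works).
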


\begin{proof}
The number of intersections of a line with the boundary of $Q_1\cup Q_2$ can be at most $2k+2m$. On the other hand, if $Q_1$ and $Q_2$ are disjoint and the line that gives $k$ and $m$ connected components, respectively, is the same, the value is achieved.
\end{proof}

\begin{lemma}
The intersection of a $k$-convex set $Q_1$ and an $m$-convex set $Q_2$ is a $(k+m-1)$-convex set,
which is tight.
\end{lemma}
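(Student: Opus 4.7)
My plan is to reduce the statement to a one-dimensional interval intersection bound, and then exhibit a concrete tight example. Fix any line $L$. By the $k$-convexity of $Q_1$, the set $L \cap Q_1$ has at most $k$ connected components (each an interval or a point), and similarly $L \cap Q_2$ has at most $m$ components. Since $L \cap (Q_1 \cap Q_2) = (L \cap Q_1) \cap (L \cap Q_2)$, the task reduces to bounding the number of components of the intersection of two systems of disjoint intervals on the line $L$.

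The heart of the proof is the following one-dimensional claim: if $A = A_1 \cup \cdots \cup A_p$ and $B = B_1 \cup \cdots \cup B_q$ are disjoint unions of intervals on a line, then $A \cap B$ has at most $p + q - 1$ components. I would establish this by counting the $0 \to 1$ transitions of the indicator function $\chi_{A \cap B} = \chi_A \cdot \chi_B$, which equals the number of components of $A \cap B$. Such a transition at a point $x$ requires that $x$ is a left endpoint of some $A_i$ with $\chi_B(x) = 1$, or a left endpoint of some $B_j$ with $\chi_A(x) = 1$. A priori this gives at most $p + q$ transitions. The savings of one come from the leftmost endpoint among all left endpoints of the $A_i$ and $B_j$: the other family is still at value $0$ there, or if two such endpoints coincide they contribute a single transition rather than two. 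Hence there are at most $p + q - 1$ components, and with $p \le k$ and $q \le m$ we conclude that $Q_1 \cap Q_2$ is $(k+m-1)$-convex.

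For tightness, I would fix a horizontal line $L$ and place $k$ disjoint intervals $I_1, \ldots, I_k$ and $m$ disjoint intervals $J_1, \ldots, J_m$ on $L$, interleaved so that $(\bigcup_i I_i) \cap (\bigcup_j J_j)$ has exactly $k + m - 1$ components (for instance, arrange the $J_j$ to straddle the gaps between consecutive $I_i$'s while overlapping portions of each). Thickening each of these intervals into a thin horizontal strip or a disk centered on $L$ produces sets $Q_1$ and $Q_2$ in $\mathbb{R}^2$ that are disjoint unions of convex pieces; since any line meets a convex piece in at most one component, $Q_1$ is $k$-convex and $Q_2$ is $m$-convex. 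The line $L$ then witnesses that $Q_1 \cap Q_2$ has $k + m - 1$ components, showing the bound is tight. The subtle step is the ``$-1$'' in the interval counting: one must verify that the leftmost left endpoint in the merged set cannot produce a $0 \to 1$ transition of the product indicator, and this is the only nontrivial part of the argument.
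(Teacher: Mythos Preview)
Your argument is correct and is essentially the same as the paper's, just phrased in the language of indicator functions on the line rather than boundary crossings. The paper counts crossings of an oriented line with $\partial Q_1$ (at most $2k$) and $\partial Q_2$ (at most $2m$), then observes that the very first and very last such crossings each save one contribution to $\partial(Q_1\cap Q_2)$, yielding at most $2(k+m-1)$ crossings; your ``leftmost left endpoint'' observation is exactly this savings, seen from the component side instead of the boundary side. For tightness the paper simply points to a figure, while your thickened-interval construction is a valid explicit realization of the same idea.
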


\begin{proof}
  An oriented line will cut the boundary of $Q_1$ at most $2k$ times and the boundary of $Q_2$ at most $2m$ times. However, the first intersection point $a$ does not contribute to the total number of cuts with $Q_1\cap Q_2$ unless $a\in Q_1\cap Q_2$, in which case it contributes only once instead of twice as intersection. The same happens with the last point, which gives the upper bound. An example proving the tightness
appears in Figure~\ref{f:int}(a).
\end{proof}

\begin{figure}
\begin{center}
\includegraphics{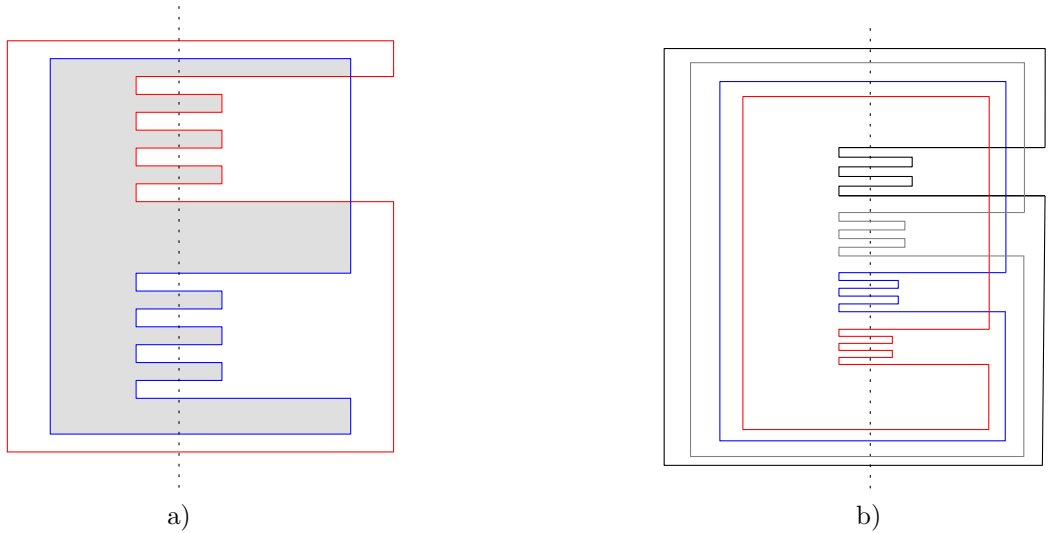}
\end{center}
\caption{Intersecting $k$-convex sets}
\label{f:int}
\end{figure}

\begin{corollary}
  The intersection of a family of $m$ $k$-convex sets is a $(m(k-1)+1)$-convex set, which is tight.
\end{corollary}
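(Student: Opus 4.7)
The plan is to proceed by induction on $m$, using the preceding lemma (intersection of a $k$-convex and an $m$-convex set is $(k+m-1)$-convex) as the engine of the inductive step, and then to exhibit an explicit family achieving the bound.

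For the upper bound, the base case $m=1$ gives $(1\cdot(k-1)+1)$-convex $=k$-convex, which is the hypothesis on a single set. For the inductive step, write $\bigcap_{i=1}^{m} Q_i = \bigl(\bigcap_{i=1}^{m-1} Q_i\bigr) \cap Q_m$. By the induction hypothesis the first factor is $((m-1)(k-1)+1)$-convex, and $Q_m$ is $k$-convex, so the previous lemma gives that the intersection is $\bigl((m-1)(k-1)+1+k-1\bigr)$-convex $= (m(k-1)+1)$-convex, as required.

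For tightness, I would build an example generalising Figure~\ref{f:int}(a). Choose a horizontal line $\ell$ and produce $m$ long, thin $k$-convex sets $Q_1,\dots,Q_m$ each of which meets $\ell$ in exactly $k$ intervals, and whose intervals along $\ell$ are shifted with respect to one another in the following interlocking way: the $k$ intervals of $Q_1$ cover a coarse partition of $\ell$, and each subsequent $Q_i$ is designed so that one of its gaps is placed just inside each surviving component of $Q_1\cap\cdots\cap Q_{i-1}$, splitting exactly one of those components into two and leaving the other components intact. Each new $Q_i$ thereby adds precisely $k-1$ extra components to the intersection along $\ell$ (the remaining $k-1$ gaps of $Q_i$ can be parked outside the current union of components). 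Starting from $k$ components after $Q_1$ and adding $k-1$ each of the $m-1$ further times yields $k+(m-1)(k-1)=m(k-1)+1$ components along $\ell$, matching the upper bound.

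The only delicate point is the inductive construction of the tight example: one must verify that the shifting can be carried out so that each $Q_i$ is genuinely $k$-convex (no line crosses its boundary in more than $k$ components, not just $\ell$), and that the gaps of $Q_i$ land in the intended components of the running intersection. This can be arranged by taking each $Q_i$ to be a horizontal slab with $k$ rectangular notches cut from its top edge, with widths and horizontal positions chosen inductively; any non-horizontal line crosses such a slab in at most $k$ components, and the horizontal placement of the notches is a one-dimensional combinatorial problem that is easily solved greedily. The upper-bound part is the routine induction; the tightness construction is the only place where care is needed.
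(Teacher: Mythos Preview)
Your upper-bound argument by induction on $m$, using the preceding lemma at each step, is correct and is exactly what the paper intends when it says ``the upper bound follows from the preceding theorem.''

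Your tightness construction has the right idea---slabs with notches whose gap positions along $\ell$ are all distinct---and this is what the paper's Figure~\ref{f:int}(b) depicts. However, the written description is internally inconsistent and contains an off-by-one error. A horizontal slab with $k$ rectangular notches cut from one side meets a horizontal line in $k+1$ components, hence is $(k+1)$-convex, not $k$-convex; you want $k-1$ notches per set. Likewise, the sentence about placing ``one of its gaps \dots\ inside each surviving component'' while ``splitting exactly one of those components'' and parking ``the remaining $k-1$ gaps \dots\ outside'' cannot all hold simultaneously: a $k$-convex set has at most $k-1$ interior gaps on $\ell$, and to gain $k-1$ new components you must place all $k-1$ of them \emph{inside} the current intersection, not outside. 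The clean version (and what the figure shows) is: let all $Q_i$ cover a common long interval $I$ on $\ell$, and give $Q_i$ its $k-1$ tiny gaps at positions $(i-1)(k-1)+1,\dots,i(k-1)$ inside $I$; the $m(k-1)$ gaps are pairwise disjoint, so $\ell\cap\bigcap_i Q_i$ has exactly $m(k-1)+1$ components. No inductive ``shifting'' or greedy placement is needed.
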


\begin{proof}
  The upper bound follows from the preceding theorem, and a construction giving its tightness is shown in Figure~\ref{f:int}(b).
\end{proof}

\begin{theorem}
  There is no Helly-type theorem for $k$-convex sets.
\end{theorem}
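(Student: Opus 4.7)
The plan is to show that no finite Helly number can work: for every positive integer $h$, I would exhibit a family of $h+1$ two-convex planar sets such that every $h$-element subfamily has a point in common, while the full family does not. Since a genuine Helly-type theorem would guarantee such a fixed $h$ (depending only on $k$ and the ambient dimension), producing one counterexample family per $h$ rules out every possible Helly-type statement already for $k=2$, and hence, since every two-convex set is also $k$-convex for all $k \geq 2$, for every $k \geq 2$.

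The building block I would rely on is the following observation: if $D$ is a convex planar set and $d \subseteq D$ is also convex, then $Q = D \setminus d$ is two-convex. Indeed, any straight line $\ell$ meets $D$ in a single segment and meets $d$ in a sub-segment of that segment, so $\ell \cap Q$ equals a segment minus a sub-segment, which has at most two connected components.

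Given $h$, I would take $D$ to be a closed disk and choose $h+1$ small open disks $d_1, \dots, d_{h+1}$ that form a redundancy-free cover of $D$: $\bigcup_i d_i \supseteq D$, yet for every $j$ there is a witness point $p_j \in D$ that lies in $d_j$ but in no other $d_i$. Such a cover is easy to arrange by placing $h+1$ well-separated centers $p_1,\dots,p_{h+1}$ in $D$ and tuning the radii so that the disks just cover $D$ while each $p_j$ stays outside $d_i$ for $i \neq j$. Setting $Q_i = D \setminus d_i$, the building block shows each $Q_i$ is two-convex; the full intersection $\bigcap_i Q_i = D \setminus \bigcup_i d_i$ is empty by the covering property; and any $h$-subfamily omits some index $j$, so $\bigcap_{i \neq j} Q_i$ contains $p_j$ and is nonempty. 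This yields the required counterexample for the given $h$.

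The main point to handle carefully is the simultaneous realization of the two properties of the cover---completeness and redundancy-freeness---using convex pieces; once the geometry is in place, the verification is purely set-theoretic. A minor technical matter concerns the closedness of the sets, which is resolved by taking $D$ closed and the $d_i$ open, so that each $Q_i$ is closed.
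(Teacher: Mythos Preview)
Your approach is correct and takes a genuinely different route from the paper's. The paper builds its family from a regular $m$-gon: for each edge $e_i$ it deletes that edge (and tiny pieces of its neighbours) from the boundary and thickens the remaining open chain into a thin C-shaped simple polygon $P_i$; any proper subfamily meets along a surviving edge, while the full family has empty intersection. Your construction instead rests on the clean observation that ``convex minus convex'' is always $2$-convex, reducing the problem to producing an irredundant convex cover of a convex body---a more conceptual and arguably more elementary route.

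Two remarks. First, the cover is less automatic than you suggest when both $D$ and the $d_i$ are round disks: if the $d_i$ are small enough to miss the other $p_j$, it is not obvious they still cover all of $D$ for large $h$. An easy fix is to let $D$ be a long thin rectangle and take the $d_i$ to be slightly overlapping disks (or strips) strung along its midline; alternatively, for a round $D$, a maximal $r$-separated set is an $r$-net, giving an irredundant cover by some $m \geq h+1$ open disks of radius $r$, and $m \geq h+1$ already suffices since every $h$-subfamily of an irredundant family still has a common point. Second, your $Q_i$ may be disconnected or multiply connected, whereas the paper's $P_i$ are simple polygons; this does not affect the theorem as stated, but the paper's construction yields the slightly sharper conclusion that Helly already fails within the class of $2$-convex simple polygons.
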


\begin{proof}
  We are constructing a family of $m$ 2-convex sets such that any subfamily has nonempty intersection yet there is no point common to all of them.
Let $Q_m$ be a regular polygon with $m$ edges $e_1,\ldots,e_m$ (refer to Figure~\ref{f:helly}).
Let $P_i^*$ be the polygonal chain obtained from the boundary of $Q_m$ by removing edge $e_i$ and an infinitesimal portion of $e_{i-1}$ and $e_{i+1}$. Finally, let us give some slight thickness to the chain so it becomes a polygon $P_i$. Notice that the polygons $P_1,\ldots,P_m$ are 2-convex, the intersection $\bigcap_{i=1}^m P_i$ is clearly empty, while the intersection of every proper subfamily $F$ is nonempty because it contains the intervals $e_j$ for all those $P_j\not\in F$.
\end{proof}

\begin{figure}
\begin{center}
\includegraphics{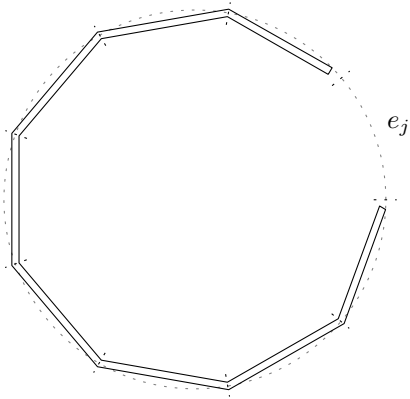}
\end{center}
\caption{No Helly-type theorem for $k$-convex sets}
\label{f:helly}
\end{figure}

The preceding lemmas apply to $k$-convex sets in general, not only polygonal domains.
(A \textit{polygonal domain} is a set obtained by a finite number of union operations of simple polygons.) However, a significant difference appears in our next results, that are possibly the most natural to explore, because they involve transversal lines, which are precisely the main concept underlying the definition of $k$-convexity.

\medskip

Let us recall \cite{wenger} that given a family of sets $Q_1,\ldots,Q_m$, a line $\ell$ is said to be a {\em transversal} of the family if $\ell$ has a nonempty intersection with each of the sets. When the sets are convex, the ordering in which they are traversed (disregarding the orientation of the line) is called a {\em geometric permutation}, a topic that has received significant attention \cite{wenger}. In particular, it has been proved that $m$ compact disjoint convex sets admit at most $2m-2$ geometric permutations, which is tight~\cite{ed-sh}.

\medskip

Let us consider now transversals of compact 2-convex sets. Notice that every object will appear at least once, but may appear twice on the transversal, which we consider as  combinatorially different cases of the associated {\em generalized geometric permutation}. What is the maximum number of generalized geometric permutations a family of $m$ 2-convex sets may have?

\begin{theorem}
The number of generalized geometric permutations of a set of $m$ $2$-convex objects
may be exponential in~$m$.
\end{theorem}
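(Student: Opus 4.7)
My plan is to construct explicitly, for each $m$, a family $\mathcal{F}_m$ of $m$ pairwise-disjoint $2$-convex sets in the plane along with $2^m$ transversal lines inducing pairwise-distinct generalized geometric permutations of $\mathcal{F}_m$. The basic gadget will be a disconnected $2$-convex set: an object consisting of two small convex pieces (two tiny disks, say), which is automatically $2$-convex since any line meets the union in at most two components. This is permitted in the present section, whose opening paragraph explicitly broadens attention from simple polygons to general $2$-convex subsets of $\mathbb{R}^2$. A transversal that crosses both pieces of such a gadget contributes two appearances of its label to the generalized geometric permutation, and the order of those two appearances provides one binary degree of freedom per gadget.

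Next I would arrange $m$ such gadgets $O_1,\dots,O_m$ so that each $O_i$ can be independently \emph{flipped}: the two pieces of $O_i$ are placed symmetrically around a dedicated ``flip center'' $c_i$, and the $m$ centers are positioned carefully so that the $m$ flip conditions become combinatorially independent in the two-parameter space of lines. I would then exhibit $2^m$ transversal lines $\ell_b$ indexed by bit-patterns $b=(b_1,\dots,b_m)\in\{0,1\}^m$, with $\ell_b$ chosen so that the order of the two crossings of $O_i$ along $\ell_b$ is dictated by $b_i$. Since the bit pattern $b$ can be read off directly from the generalized geometric permutation of $\ell_b$ (the $i$-th bit being the relative order of the two appearances of label $i$), the $2^m$ permutations so obtained are pairwise distinct, which gives the exponential lower bound.

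The main obstacle is establishing independence of the $m$ flips. Because the set of lines in the plane is only two-dimensional, $m$ flip conditions carved out by gadgets of $O(1)$ combinatorial complexity would yield only $O(m^2)$ cells in the dual arrangement --- far short of $2^m$. I would overcome this by letting each gadget consist of many well-placed convex pieces (so that the gadgets' total combinatorial complexity grows with $m$), arranged along a convex curve of sufficiently varying slopes, in the spirit of the cubic used in Lemma~\ref{lem:3collinear}. With such a placement the gadgets remain $2$-convex (no line meets any single gadget in more than two of its pieces), while the dual arrangement of transversal lines is enriched enough that all $2^m$ sign patterns of the $m$ flip conditions are realized by actual lines. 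Note that this is fully consistent with the polynomial $O(n^2)$ bound stated in the introduction for $2$-convex \emph{polygonal domains} of total vertex complexity $n$: there $n$ is a fixed parameter, whereas in the present construction the total complexity of the gadgets necessarily grows with $m$.
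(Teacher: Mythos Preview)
Your proposal has a genuine gap in the bit encoding. You propose to read off bit $b_i$ as ``the relative order of the two appearances of label $i$'' in the permutation. But a generalized geometric permutation records only the sequence of \emph{object labels}; the two occurrences of the symbol~$i$ are indistinguishable, so no ``relative order'' between them can be extracted. What you actually have is a binary choice for the \emph{line} (which side of the segment joining the two disk centres it passes on), not a bit that is visible in the permutation itself: two lines making opposite choices at $O_i$ but agreeing everywhere else produce the \emph{same} generalized geometric permutation. Your fallback plan with many-piece gadgets inherits this problem and adds another: once a gadget has many pieces on a convex arc you no longer say what ``the flip condition for $O_i$'' even means, nor how the pieces are to be placed so that all $2^m$ sign patterns are realised by actual transversals. ``The dual arrangement is enriched enough'' is an aspiration, not a construction.

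The paper's proof avoids both issues by encoding bit $i$ as ``does $R_i$ appear once or twice on the transversal'' --- a property that \emph{is} unambiguously readable from the permutation. Each $R_i$ is a \emph{connected} $2$-convex region, roughly a long thin rectangle whose left side is replaced by a shallow circular arc carrying $2^{\,i-1}$ small \emph{noses} (reflex--convex bumps; a line passing through a nose meets $R_i$ in two components, otherwise in one). The objects $R_1,\dots,R_m$ are lined up left to right, and the noses are positioned so that a one-parameter family of rotating transversal lines hits them exactly according to an $m$-bit binary counter: at the $j$-th position of the rotation the line hits some nose of $R_i$ precisely when the $i$-th binary digit of $j$ equals~$1$. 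This explicit binary-counter placement --- and in particular the choice of $2^{\,i-1}$ noses on $R_i$ --- is what produces $2^m$ distinct permutations. Your plan correctly anticipates that the gadgets' total complexity must grow with $m$, but never supplies the placement that makes the bits independent, and its readout mechanism would not distinguish the resulting permutations even if it did.
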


\begin{figure}
\begin{center}
\includegraphics[width=16cm]{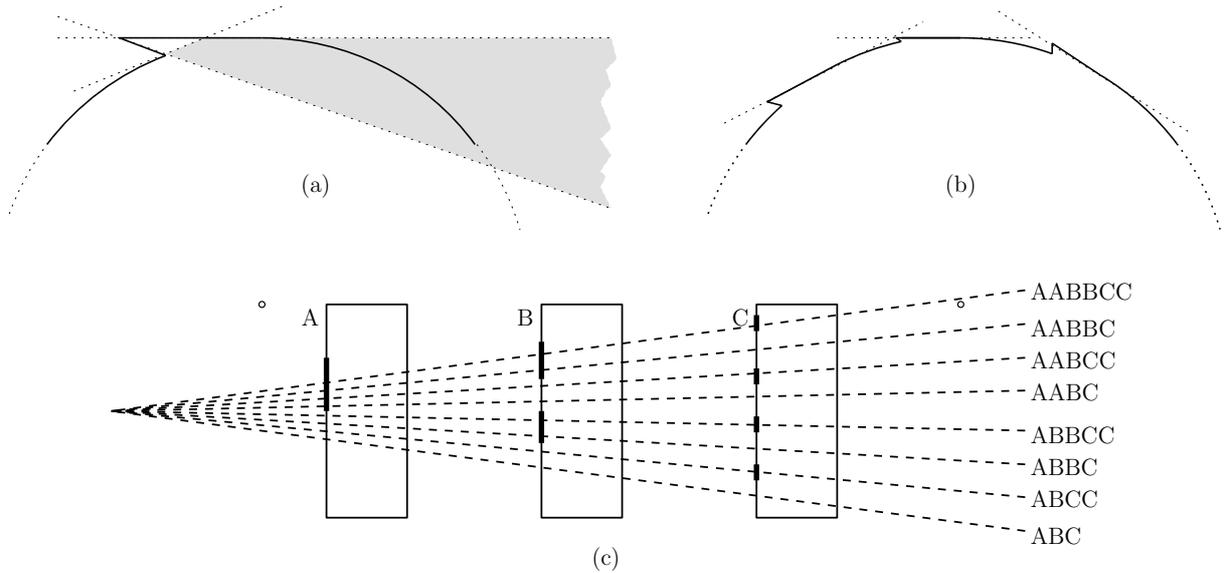}
\end{center}
\caption{(a) and (b): 2-convex objects can have an unbounded number of noses. (c): The number of generalized geometric permutations for a set of 2-convex objects can be exponential.}
\label{f:exponential}
\end{figure}

\begin{proof}
  We first show that a 2-convex objects can be arbitrarily complex.  A
  {\it nose} of an object $O$ is a zig-zag sequence of a reflex and a
  convex vertex of the boundary of $O$ as depicted in
  Figure~\ref{f:exponential}(a). Locally a line 'normal' to the nose
  intersects $O$ in two connected components. The shaded area
  in Figure~\ref{f:exponential}(a) indicates the region which is not
  intersected by a line tangent to one of the vertices of the
  nose. Thus we can iteratively construct further noses in this region
  without destroying the 2-convexity of
  $O$. Figure~\ref{f:exponential}(b) shows an example where the
  principle shape of $O$ is part of a disk. Observe that when the
  radius of the disk is large enough we can arrange an arbitrary
  number of flat noses such that $O$ stays 2-convex.

\medskip

  Let $R_i$ be an object which has the base shape of an axis-aligned
  rectangle, where the left side is actually part of a circle with
  sufficiently large radius and a center point far to the right of
  $R_i$. We place $2^{i-1}$ noses along this side, so that $R_i$ stays
  2-convex as described above. Next we arrange $k$ objects $R_1$ to $R_k$
  from left to right, as depicted in Figure~\ref{f:exponential}(c) for
  $k=3$. We position the noses for each $R_i$ in a regular way such
  that a rotating line (see the dashed lines in
  Figure~\ref{f:exponential}(c)) intersects the noses in the same
  manner as the digit ``1'' shows up in the sequence of all $2^k$ binary
  numbers of length $k$. Thus, we get $2^k$ different generalized
  geometric permutations for this setting, as each object appears
  twice if the nose is intersected, but only once otherwise.
\end{proof}

\begin{theorem}
  The maximum number of generalized geometric permutations of a set of $2$-convex polygonal domains with a total of $n$ boundary edges is $\Theta(n^2)$.
\end{theorem}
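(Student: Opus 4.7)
I would split the proof into the two matching bounds and handle them by rather different means.

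\paragraph*{Upper bound $O(n^2)$.}
The plan is to lift the problem into the space of oriented lines, parameterized as the cylinder $\mathcal{C}=\mathbb{S}^{1}\times\mathbb{R}$ via the direction $\theta$ and the signed distance $\rho$ from the origin. To each vertex $v$ of the family I attach the sinusoid $\gamma_{v}\subset\mathcal{C}$ consisting of those oriented lines that pass through $v$. Since the total number of boundary edges is $n$, there are at most $n$ such curves, and any two of them meet in at most two points. The key observation is that the combinatorial type of a transversal line $\ell$---the cyclic sequence of boundary edges it crosses, and hence the generalized geometric permutation it induces---is locally constant on $\mathcal{C}$ and can only change when $\ell$ sweeps across one of the curves $\gamma_{v}$. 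Therefore the number of distinct generalized geometric permutations is bounded by the number of faces in the arrangement of these $n$ sinusoids in $\mathcal{C}$, which is $O(n^{2})$; quotienting oriented by unoriented lines only discards a factor $2$.

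\paragraph*{Lower bound $\Omega(n^2)$.}
To match the upper bound I would build a family of 2-convex polygonal domains, total edge count $n$, realizing $\Omega(n^{2})$ permutations. Here I would borrow the nose gadget of Theorem~6, but instead of exploiting only a one-parameter family of rotating transversals, I would exploit the full two-dimensional line space. Concretely, the plan is to take $m=\Theta(\sqrt{n})$ near-convex 2-convex objects, each a shallow circular arc decorated with $\Theta(\sqrt{n})$ carefully spaced noses (so that each object stays 2-convex, as verified in the proof of Theorem~6), and to position the objects in a spiral-like fan around a central region. By choosing the radii and nose positions so that each pair of noses from two distinct objects singles out a unique critical line in $\mathcal{C}$, the arrangement of critical curves has $\Theta(n^{2})$ faces, and the nose-entry pattern changes at every such critical curve. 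One then checks that these $\Theta(n^{2})$ faces produce $\Omega(n^{2})$ pairwise distinct generalized geometric permutations.

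\paragraph*{Main obstacle.}
The upper bound is essentially a routine arrangement count, relying only on 2-convexity to ensure that the boundary-edge crossing sequence determines the permutation without ambiguity. The genuine difficulty is in the lower bound: one must argue that the $\Omega(n^{2})$ cells of the nose arrangement actually yield \emph{distinct} label sequences, rather than merely distinct geometric configurations of the transversal. This is the step I expect to require the most care, and it is where the specific ordering imposed by a spiral placement (each nose transition forces a clearly identifiable flip between ``object visited once'' and ``object visited twice'', in a way that cannot be repeated later by another critical event) will do the work.
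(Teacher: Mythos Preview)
Your upper bound argument is essentially the paper's: both pass to the dual space of lines and count cells in an arrangement of $O(n)$ curves (the paper dualizes edges to double wedges bounded by $2n$ straight lines; you dualize vertices to $n$ sinusoids on the cylinder). Either way the count is $O(n^{2})$, and your observation that the edge-crossing sequence is constant on each cell is exactly what is needed.

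The lower bound is where you diverge, and here the paper's route is both simpler and avoids the obstacle you flag. Instead of $\Theta(\sqrt{n})$ complicated objects with $\Theta(\sqrt{n})$ noses each, the paper uses $\Theta(n)$ of the simplest possible 2-convex polygons: nonconvex quadrilaterals $Q_i=a_ib_ic_id_i$, all translates of one another along a horizontal line, with the reflex vertices $a_i$ perturbed into general position. Every line through a pair $a_ia_j$ is a transversal, and $Q_k$ is hit once or twice according to whether $a_k$ lies below or above that line. General position guarantees the $\binom{n}{2}$ such lines give pairwise distinct above/below sets, hence pairwise distinct once/twice patterns, hence $\Omega(n^{2})$ distinct generalized permutations.

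The reason this is cleaner than your plan is precisely the obstacle you identify. With only $\sqrt{n}$ labels, a nose transition on object $A$ flips ``$A$ once'' to ``$A$ twice'', but \emph{which} nose of $A$ is entered is invisible in the permutation. So the $\Theta(n^{2})$ cells of your nose arrangement collapse: many of them yield the same label word, and to separate $\Omega(n^{2})$ of them you would have to exploit changes in the relative \emph{order} of the two copies of $A$ against other objects --- a much more delicate geometric design than the spiral sketch suggests. The paper sidesteps this entirely by spending the edge budget on $\Theta(n)$ objects, so that the once/twice bit-vector alone has length $\Theta(n)$ and already carries $\Omega(n^{2})$ distinguishable values.
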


\begin{figure}
\begin{center}
\includegraphics{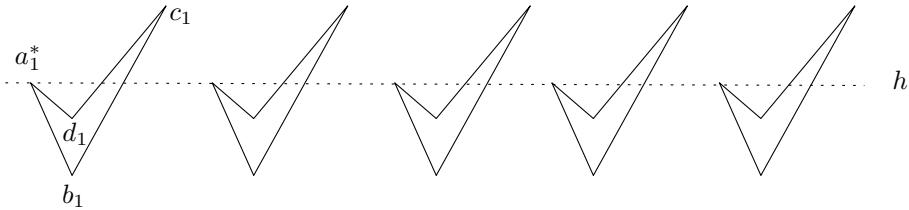}
\end{center}
\caption{The number of generalized geometric permutations for a set of 2-convex polygons can be quadratic.}
\label{f:geo_permut}
\end{figure}

\begin{proof}
  As the standard duality transform maps each edge to a double wedge, the induced arrangement of $2n$ lines in the dual plane yields a quadratic number of cells that bound from above the number of possible ways of stabbing the set of objects.

\medskip

  To see that the bound is asymptotically tight, we give a construction using $n$ 2-convex polygons; in fact, the simplest possible ones, namely, nonconvex quadrilaterals.
  Let $Q_1^*=a_1^*b_1c_1d_1$ be the quadrilateral shown in Figure~\ref{f:geo_permut}. Let $h$ be a horizontal line through $a_1^*$ and let $Q_i^*=a_i^*b_ic_id_i$ be translates of $Q_1$ in such a way that all of them are pairwise disjoint and $a_1^*,a_2^*,\ldots,a_n^*$ appear in this order on $h$. Finally, let us perturb infinitesimally $a_i^*$ to $a_i$ in such a way that
  \begin{itemize}
    \item[a)] points $a_1,a_2,\ldots,a_n$ are in general position,
    \item[b)] for all $i,j,k$, with $i\neq j$, the line $a_ia_j$ leaves above the point $c_k$ and below the point $d_k$.
  \end{itemize}
  For $i=1,\ldots,n$, let $Q_i$  be the quadrilateral with vertices $a_ib_ic_id_i$. There are $\binom{n}{2}$ lines of the type $a_ia_j$; each of them leaves above and below a different set of points $a_k$, and is a transversal because it crosses all the segments $b_kc_k$ for every $k$. Now, if $a_k$ is below the transversal, $Q_k$ is intersected once, while if $a_k$ is above the transversal, $Q_k$ is intersected twice. Therefore, we have obtained $\binom{n}{2}$ generalized geometric permutations.
\end{proof}

Observe that the two preceding theorems apply {\it mutatis mutandis} to $k$-convex sets, because 2-convex sets are also $k$-convex for $k \geq 3$.

\section{Discussion and Open Problems}
\label{discussion}

In this paper we have considered a new concept of generalized convexity. Moving from convexity to 2-convexity is seemingly a small change, as we are just accepting lines to intersect in at most two connected components instead of one. It is remarkable that this modest departure has strong consequences in the complexity of the new class of objects, as we have seen in this paper; obviously, even more when the degree of convexity is increased.
Several open problems remain and many interesting  questions can be raised. We list some of them below.
\begin{itemize}

  \item[a)] Can the recognition of \mbox{$2$-convex} polygons be carried out in linear time, improving on the $O(n\log n)$ algorithm we provide?

 \item[b)] Finding the smallest $k$ such that a given polygon is \mbox{$k$-convex} is a 3SUM-hard problem. In particular, recognizing \mbox{$4$-convexity} is already 3SUM-hard. We do not know whether the situation is the same for \mbox{$3$-convexity} or whether a subquadratic time algorithm exists for this case.

 \item[c)] Is it possible to generalize
      Theorem \ref{thm:convexsubset}? For example, is it true that every \mbox{$k$-convex} polygon with $n$ vertices has a large subset of vertices that are the vertices of a \mbox{$(k-1)$-convex} polygon?

\item[d)] Let us define the $k$-convex hull of a point set $S$ as the smallest area polygon which is $k$-convex, has a subset $T\subset S$ as vertex set and every point in $S\smallsetminus T$ is inside the polygon. Which is the complexity of computing this $k$-convex hull? Observe that for $k=1$ this notion is the usual convex hull of a point set.

 \item[e)] Give combinatorial bounds and efficient algorithms for decomposing a polygon into $k$-convex subpolygons. This is a classical problem when convex subpolygons are considered \cite{Ke} and also the decomposition into pseudotriangles, which are 2-convex polygons, has been studied \cite{RSS}. However, the latter result might be improved by considering more general 2-convex polygons.

\item[f)] A $k$-convex decomposition of a set~$S$ of $n$ points in the plane is a decomposition of its convex hull into $k$-convex polygons such that every point in $S$ is a vertex of some of the polygons. For $k=1$, a triangulation suffices, though it has been proved that if we allow arbitrary convex sets the number can be reduced~\cite{NRU}. On the other hand, it has been shown that there exist always a decomposition into exactly $n-2$ pseudotriangles, which are 2-convex polygons \cite{RSS}. It is an intringuing open problem to decide whether this number can be reduced to sublinear if we allow arbitrary 2-convex polygons.
\end{itemize}

Finally, let us mention that in this paper we have focused on
\mbox{$k$-convex} \emph{polygons}. It is natural to define a similar
concept for finite point sets, namely, being in \mbox{$k$-\emph{convex position}}, where $k$
is given by the smallest degree of convexity attained when \emph{all} possible
polygonizations of the point set are considered. This issue is
considered in a companion paper \cite{second}.

\bigskip

\noindent{\bf Acknowledgements} We would like to thank Thomas
Hackl, Clemens Huemer, David Rappaport, Vera \mbox{Sacrist\'an} and Birgit Vogtenhuber for sharing discussions on this topic.

\bibliographystyle{abbrv}

\end{document}